\newtheorem{corollary}{Corollary}
\newtheorem{proposition}{Proposition}
\newtheorem{remark}{Remark}
\newtheorem{theorem}{Theorem}
\newtheorem{lemma}{Lemma}
\newcommand{\sog}{SO(3)}
\newcommand{\sogp}{SO(p)}
\newcommand{\og}{O(3)}
\newcommand{\ogp}{O(p)}
\newcommand{\ogr}{O(r)}
\newcommand{\stiefel}{V_2(\mathbb{R}^3)}
\newcommand{\stiefelrp}{V_r(\mathbb{R}^p)}
\newcommand{\stiefelpp}{V_{p-1}(\mathbb{R}^p)}
\newcommand{\sgv}{\rho}
\newcommand{\tr}{\operatorname{tr}}
\newcommand{\etr}{\operatorname{etr}}
\newcommand{\Rd}{\partial}
\newcommand{\diag}{\operatorname{diag}}
\newcommand{\sgn}{\operatorname{sgn}}
\newcommand{\mbR}{{\mathbb{R}}}
\bmdefine{\bx}{x}
\bmdefine{\btheta}{\theta}
\newcommand{\bX}{X}
\newcommand{\bTheta}{\Theta}
\newcommand{\hpm}{\hphantom{-}}
\def\commentA#1{ }
\title{Properties and applications of Fisher distribution on the rotation group}
\author{Tomonari Sei\thanks{Department of Mathematics, Keio University}, 
Hiroki Shibata\thanks{Department of Mathematical Informatics, Graduate School of Information Science and Technology, University of Tokyo} ,  
Akimichi Takemura\footnotemark[2]\ \thanks{JST, CREST} , \\
Katsuyoshi Ohara\thanks{Faculty of Mathematics and Physics, Kanazawa University} , 
Nobuki Takayama\thanks{Department of Mathematics, Kobe University}\ \footnotemark[3]} 
\date{June 2012}
\begin{document}
\maketitle

\begin{abstract}
We study properties of Fisher distribution (von Mises-Fisher distribution, 
matrix Langevin distribution) on the rotation group SO(3).
In particular we apply the holonomic gradient descent, introduced by Nakayama et al. (2011),
and a method of series expansion  
for evaluating the
normalizing constant of the distribution
and for computing the maximum likelihood estimate.
The rotation group 
can be identified with the Stiefel manifold
of two orthonormal vectors.
Therefore from the viewpoint of statistical modeling, it is of interest to compare 
Fisher distributions on these manifolds.
We illustrate the difference with an example of near-earth objects data.
\\
{\it Keywords: algebraic statistics; directional statistics;
holonomic gradient descent; maximum likelihood estimation; rotation group.}
\end{abstract}

\nocite{M2}
\nocite{asir}
\nocite{openxmmath}

\section{Introduction}
\label{sec:introduction}
In this paper we apply the holonomic gradient descent (HGD)
introduced in \cite{hgd} 
and a method of series expansion for evaluating the normalizing constant 
of Fisher distribution
on the rotation group and on Stiefel manifolds
and for obtaining the maximum likelihood estimate.
Fisher distribution is the most basic exponential 
family model for these manifolds.

The general theory of exponential families is  well established 
(e.g.\ \cite{barndorff-nielsen}).  In nice ``textbook'' cases, the normalizing
constant of the exponential family (i.e.\ its cumulant generating function)
can be explicitly evaluated and
then the calculation of maximum likelihood estimate is also simple.  However in 
general, the integral  defining the normalizing constant of an exponential family 
can not be explicitly evaluated. 
Various  techniques, such as infinite series expansion, numerical integration, 
Markov chain Monte Carlo methods,  iterative proportional scaling, 
have been applied for these cases.

Recently, we introduced a very novel approach, the holonomic gradient descent,
for evaluation of the normalizing constant and solving the likelihood equation (\cite{hgd}).
Our approach provides a systematic methodology for these tasks.
Note that the normalizing constant is a definite integral over the sample space, where the integrand contains the parameter of the family of distributions.  
The likelihood equation involves
differentiation of the normalizing constant with respect to the parameter.
In the holonomic gradient descent, 
the theory of $D$-modules is used to derive a set of partial differential equations
satisfied by the normalizing constant. 

We illustrate the HGD method for a simple example.
Let $C(\kappa)$, $\kappa\geq 0$, be the normalizing constant of the von Mises-Fisher distribution on $S^1=\{(x_1,x_2)\mid x_1^2+x_2^2=1\}$
with the density function
\begin{align*}
 p(x|\kappa)
 = C(\kappa)^{-1}\exp\left(\kappa x_1\right),
 \quad x=(x_1,x_2)\in S^1,
\end{align*}
with respect to the uniform distribution on $S^1$.
We assumed that the mean direction is known to be $(1,0)$ just for simplicity.
For the sample mean $\bar{x}=(\bar{x}_1,\bar{x}_2)$ of a given data set on $S^1$ of size $n$,
the log-likelihood function is given by $n(\kappa\bar{x}_1-\log C(\kappa))$.
We calculate the maximum likelihood estimate by gradient descent methods,
that is, update the parameter $\kappa$ according to
$\kappa\leftarrow \kappa + A\partial\{\kappa\bar{x}_1-\log C(\kappa)\}$
with some (fixed or adaptive) number $A$,
where $\partial=d/d\kappa$ denotes the derivative with respect to $\kappa$.
This needs evaluation of $C(\kappa)$ and its derivatives for a number of points $\kappa$.
The HGD is the gradient descent method together with the following algorithm
of computing $C(\kappa)$ and its derivatives.
First we note that the function $C(\kappa)$ is the modified Bessel function of the first kind
and satisfies a differential equation
\begin{align}
 \label{eq:vMF-S1}
 \partial^2 C + \frac{1}{\kappa}\partial C - C = 0
\end{align}
(see e.g.\ \cite{MardiaJupp2000}).
Assume that we know the values of $C$ and $\partial C$ at a given point $\kappa^{(0)}>0$.
Then values of $C$ and $\partial C$ at any other point $\kappa^{(1)}>0$
is obtained if one solves the differential equation
\begin{align}
 \label{eq:vMF-S1-simul}
 \partial
 \begin{pmatrix}
  C\\ \partial C
 \end{pmatrix}
 = 
 \begin{pmatrix}
  0& 1\\
  1& -\kappa^{-1}
 \end{pmatrix}
 \begin{pmatrix}
  C\\ \partial C
 \end{pmatrix}
\end{align}
numerically from $\kappa=\kappa^{(0)}$ to $\kappa=\kappa^{(1)}$.
A key point is that the coefficient matrix in (\ref{eq:vMF-S1-simul}) is easily evaluated.
Now the HGD algorithm is given as follows:
\begin{itemize}
 \item[1.] Compute $C(\kappa^{(0)})$ and $\partial C(\kappa^{(0)})$ at a point $\kappa^{(0)}$
	   by some method such as numerical integration or series expansion.
 \item[2.] For $t=1,2,\ldots$, repeat the following procedure
	   until $\kappa^{(t)}$ converges.
	   \begin{itemize}
	    \item[2-a.] Determine $\kappa^{(t)}$ from $\kappa^{(t-1)}$
			according to the gradient method.
	    \item[2-b.] Solve the equation (\ref{eq:vMF-S1-simul}) numerically
			from $\kappa^{(t-1)}$ to $\kappa^{(t)}$.
	   \end{itemize}
\end{itemize}
The method needs direct computation of $C(\kappa)$ and $\partial C(\kappa)$
only once, at $\kappa=\kappa^{(0)}$.

The method is also available for multi-dimensional parameters.
Assume that a function $C(\theta)$ of $\theta\in\mathbb{R}^d$, typically the normalizing constant
of a parametric family,
satisfies the following partial differential equations:
\begin{align}
 \label{eq:Pfaffian-system}
 \frac{\partial}{\partial\theta_i}G
 &= P_i(\theta)G,
 \quad i=1,\ldots,d,
\end{align}
where $G=G(\theta)$ is a finite-dimensional vector consisting of
some partial derivatives of $C(\theta)$ and $P_i(\theta)$ is a square matrix of rational functions of $\theta$ for each $i$.
An example is the equation (\ref{eq:vMF-S1-simul}), where $\theta=\kappa$.
Note that (\ref{eq:Pfaffian-system}) is essentially an {\em ordinary} differential equation
because each equation contains only one $\partial_i$.
The equation (\ref{eq:Pfaffian-system}) is called {\em a Pfaffian system} of $C(\theta)$.
Let $\theta^{(0)}$ and $\theta^{(1)}$ be two points in $\mathbb{R}^d$
and assume that a numerical value of $G(\theta^{(0)})$ is known.
Then we can obtain a numerical value of $G(\theta^{(1)})$ as follows.
By recursive argument, it is sufficient to consider the case that $\theta^{(0)}$ and $\theta^{(1)}$
have the same components all but the $i$-th component for some $i$.
Then a numerical value of $G(\theta^{(1)})$ is obtained
by solving the equation (\ref{eq:Pfaffian-system}) for $i$
with the initial condition of $G(\theta^{(0)})$.
Now the HGD algorithm is the same as the one-dimensional case.

In this paper we apply the holonomic gradient descent 
and a method of series expansion
to Fisher distribution on the rotation group $\sog$ and on
the Stiefel manifold $\stiefel$ of two orthonormal vectors.  The Fisher
distribution on Stiefel manifolds and orthogonal groups 
has been studied by number of authors.
However only a few papers (\cite{prentice-1986}, \cite{wood-1993})
study the Fisher distribution on the special orthogonal group $\sogp$.

The holonomic gradient descent needs the initial value
for the differential equation as illustrated above.
To evaluate this value, we develop an explicit formula
of the infinite series expansion for $\sog$.
An alternative method is a one-dimensional integration formula
proposed by \cite{wood-1993}.
Furthermore, as a referee pointed out, 
the Fisher distribution on $\sog$ is identified with the Bingham distribution on
the real projective space $\mathbb{R}P^3$ (\cite{prentice-1986}, \cite{wood-1993}, \cite{MardiaJupp2000}).
The normalizing constants of the Bingham distributions
are hypergeometric functions of a matrix argument and
saddlepoint approximations to these normalizing constants were given by \cite{KumeWood2005}.

The organization of the paper is as follows.  For the rest of this section
we set up notation and summarize preliminary facts on special orthogonal groups and 
Stiefel manifolds.
In Section \ref{sec:fisher-distribution} we derive some properties
of Fisher distribution on special orthogonal groups and Stiefel manifolds.
In Section \ref{sec:two-methods}  we derive the set of partial differential
equations satisfied by the normalizing constant (Section \ref{subsec:holonomic}).  
We also give an infinite series expansion
for the normalizing constant (Section \ref{subsec:series-expansion}).
In Section \ref{sec:data-analysis} we apply the results of previous
sections to the data on orbits of near-earth objects.


\subsection{Notation and preliminary facts}
\label{subsec:notation}
Here we set up notation of this paper
and summarize some preliminary facts.
Although we are primarily interested in $3\times 3$
matrices 
for practical and computational reasons, 
we set up our notation for general dimension.
%
Let
\[
\stiefelrp= \{ A \in \mathbb{R}^{p\times r} \mid A^\top A=I_r \}
\quad (0 < r\le p)
\]
denote  the 
Stiefel manifold of $p\times r$ real matrices with orthonormal columns,
where 
$\mathbb{R}^{p\times r}$ denotes the set of
$p\times r$ real  matrices and $A^\top$ denote the transpose of $A$.
In particular for $r=p$,
\[
\stiefelrp=\ogp
\]
is the set of $p\times p$ orthogonal
matrices. 
\[
\sogp = \{ X \in \ogp \mid \det X = 1 \}
\]
denotes the special orthogonal group.


Let ${\rm Vol}$ denote the invariant measure (volume element) on 
$\stiefelrp$ and
let 
\[
\mu(\cdot)=\frac{1}{{\rm Vol}(\stiefelrp)} {\rm Vol}(\cdot)
\]
denote the invariant probability measure on $\stiefelrp$.
We also call $\mu$ the uniform distribution.
Similarly for $\sogp$,  by 
$\mu(\cdot)={\rm Vol}(\cdot)/{\rm Vol}(\sogp)$ we denote the uniform distribution with 
\[
{\rm Vol}(\sogp)=\frac{1}{2}{\rm Vol}(\ogp).
\]
If $p\geq 3$ and $X=(X_{ij})$ is uniformly distributed on $\sogp$,
then we can show that the first two moments are
\begin{align}
 \label{eq:sogp-moment}
 E[X_{ij}]=0
 \ \ \mbox{and}\ \ 
 E[X_{ij}X_{kl}] = \frac{1}{p}1_{\{(i,j)=(k,l)\}}.
\end{align}
We prove $E[X_{11}]=E[X_{11}X_{21}]=0$ and $E[X_{11}^2]=1/p$.
The other cases are similarly proved.
Let $Q=\diag(-1,1,\cdots,1,-1)\in\sogp$.
Then $(QX)_{11}=-X_{11}$ and $(QX)_{21}=X_{21}$.
Since $X$ and $QX$ have the same distribution,
we have $E[X_{11}]=0$ and $E[X_{11}X_{21}]=0$.
By a similar reason, $X_{11}$ and $X_{i1}$ for each $i$
have the same marginal distribution.
Therefore $E[X_{11}^2]=E[X_{i1}^2]$.
Since $\sum_{i=1}^pX_{i1}^2=1$, we have $E[X_{11}^2]=1/p$.

For a $p\times r$ matrix $\Theta \in \mbR^{p\times r}$, $r\le p$, let
\[
\Theta = Q D R,  \quad Q\in \stiefelrp, \ D=\diag(\sgv_1,\ldots,\sgv_r), \ R\in \ogr
\]
denote its singular value decomposition (SVD),
where $\sgv_1\geq\dots\geq\sgv_p\geq 0$.
Now let $\Theta \in {\mbR}^{p\times p}$ be a square matrix and $\Theta=QDR$ be the SVD.
Let $\epsilon=\det(QR)\in\{-1,1\}$.
Then we can write
\begin{equation}
\label{eq:svd-sop}
\Theta = \tilde{Q} \tilde{D} \tilde{R},  \quad \tilde{Q}, \tilde{R}\in \sogp, \ \tilde{D}=\diag(\epsilon \sgv_1, \sgv_2, \dots, \sgv_p),
\end{equation}
where $\tilde{Q}=Q\diag(\det Q,1,\dots,1)$ and $\tilde{R}=\diag(\det R,1,\dots,1)R$.
If $\Theta$ is non-singular, we have $\epsilon=\sgn\det\Theta$.
We call the decomposition (\ref{eq:svd-sop}) the {\em sign-preserving SVD} of $\Theta$ with respect to $\sogp$.
We also call 
$\phi_1 = \epsilon \sgv_1$, $\phi_i = \sgv_i$, $i\ge 2$, the {\em sign-preserving
singular values} of $\Theta$.
The decomposition is also used in \cite{prentice-1986} and \cite{wood-1993}.

\section{Fisher distributions on $\stiefelrp$ and $\sogp$}
\label{sec:fisher-distribution}
In this section we consider Fisher distribution on 
$\stiefelrp$ and $\sogp$.  In particular 
we clarify the difference
between Fisher distributions on $\stiefelpp$ and $\sogp$.
Basic facts on Fisher distribution on $\stiefelrp$ are
summarized in Chapter 13 of \cite{MardiaJupp2000}.

Let ${\cal X}$ denote either $\stiefelrp$ or $\sogp$.
The density of the Fisher distribution on ${\cal X}$
with respect to the uniform distribution $\mu$ is given by
\[
 f(\bX;\bTheta) = \frac{1}{c(\bTheta)}\etr(\bTheta^\top \bX), \quad X\in {\cal X},
\]
where  $\bTheta = (\theta_{ij}) \in \mathbb{R}^{p\times r}$
is  the parameter matrix, $\etr(\cdot)=\exp(\tr(\cdot))$,  and
\begin{equation}
\label{eq:norm-const}
c(\bTheta) = \int_{{\cal X}} \etr(\bTheta^\top \bX) \mu(d\bX)
\end{equation}
is the normalizing constant.  For $\stiefelrp$ it is well known 
(e.g.\ \cite{KhatriandMardia1977}, \cite{muirhead82}, \cite{chikuse03}) that
$c(\Theta)$ is a hypergeometric function
$c(\Theta)={}_0F_1(p/2,Y)$ with a matrix argument, 
where $Y=\Theta^\top \Theta/4$.  
However properties of $c(\Theta)$ for the special orthogonal
group ${\cal X}=\sogp$ have not been studied in detail.
For the case of $\sog$, 
following the approach in \cite{prentice-1986}, \cite{wood-1993} 
used the correspondence between the Fisher distribution on $\sog$ and the Bingham distribution on
the unit sphere $S^3$ or the real projective space $\mathbb{R}P^3$ and showed that
$c(\Theta)$ can be written as a one-dimensional integral
involving the modified Bessel function of degree zero.
The normalizing constants of the Bingham distributions
are hypergeometric functions of a matrix argument and
saddlepoint approximations to these normalizing constants were given by \cite{KumeWood2005}.
In Section \ref{sec:two-methods}
we derive differential equations and an infinite series expansion of $c(\Theta)$ 
for $\sog$.

Let $\bx_1,\dots, \bx_p$ be the columns of $X\in \sogp$.  Since
$\bx_p$ is uniquely determined from $\bx_1,\dots, \bx_{p-1}$, we can 
identify $\sogp$ with $\stiefelpp$ by
\begin{equation}
\label{eq:so-stiefel-identificatin}
(\bx_1,\dots, \bx_p) \in \sogp  \ \leftrightarrow \ 
(\bx_1,\dots, \bx_{p-1})\in \stiefelpp 
\end{equation}
This leads to the question of differences between Fisher distributions
on $\sogp$ and those on $\stiefelpp$.  
Let $\Theta = (\btheta_1, \dots, \btheta_p) \in \mbR^{p\times p}$  be
a parameter matrix for Fisher distribution on $\sogp$. By
setting $\btheta_p=0$, we clearly obtain a Fisher distribution on $\stiefelpp$.
Hence the family of Fisher distributions on $\stiefelpp$ is a submodel
of the family of Fisher distributions on $\sogp$.
It can be easily seen that for $p=2$, $\btheta_2$ is redundant and
these two families are the same. 
However for $p\ge 3$, the family of Fisher distributions on
$\stiefelpp$ is a strict submodel of that on $\sogp$.
We state this as a lemma.

\begin{lemma}
For $p\ge 3$, the family of Fisher distributions on $\stiefelpp$
is a strict submodel of that on $\sogp$.
\end{lemma}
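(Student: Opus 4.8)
The plan is to treat the Fisher distribution on $\sogp$ as an exponential family whose natural parameter is the full matrix $\Theta\in\mbR^{p\times p}$ and whose natural statistic is $X$ itself, i.e.\ the collection of coordinate functions $\{X_{ij}\}$. In these terms the Stiefel family is exactly the subfamily obtained by imposing the linear constraint $\btheta_p=0$ on the natural parameter, which carves out a proper linear subspace of $\mbR^{p\times p}$. Since the inclusion of families is already established in the excerpt (setting $\btheta_p=0$ removes all dependence on $\bx_p$ and reproduces the $\stiefelpp$ density under the identification (\ref{eq:so-stiefel-identificatin})), strictness of the submodel is equivalent to \emph{identifiability} of the $\sogp$ family: if distinct values of $\Theta$ always give distinct distributions, then the distributions with $\btheta_p=0$ form a proper subset of all of them, and we are done.

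First I would reduce ``same distribution'' to a linear condition. Two parameters $\Theta,\Theta'$ give the same density with respect to $\mu$ iff $\etr(\Theta^\top X)$ and $\etr(\Theta'^\top X)$ are proportional on $\sogp$, that is, iff $\tr(\Phi^\top X)$ is constant on $\sogp$, where $\Phi=\Theta-\Theta'$. Taking expectation under the uniform measure and using $E[X_{ij}]=0$ from (\ref{eq:sogp-moment}) (valid for $p\ge 3$), that constant is forced to be $0$. Hence identifiability amounts to showing that the coordinate functions $X_{ij}$ are linearly independent on $\sogp$, i.e.\ that $\tr(\Phi^\top X)=0$ for all $X\in\sogp$ implies $\Phi=0$.

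The key step is to compute the $\mu$-variance of $\tr(\Phi^\top X)=\sum_{i,j}\Phi_{ij}X_{ij}$. Expanding the square and applying the second-moment identity $E[X_{ij}X_{kl}]=\frac{1}{p}1_{\{(i,j)=(k,l)\}}$ from (\ref{eq:sogp-moment}) collapses all cross terms and yields $E[(\tr\Phi^\top X)^2]=\frac{1}{p}\sum_{i,j}\Phi_{ij}^2$. If $\tr(\Phi^\top X)\equiv 0$ on $\sogp$, the left-hand side vanishes, forcing $\Phi=0$. This establishes linear independence of the coordinate functions and hence identifiability, so any $\Theta$ with $\btheta_p\neq 0$ produces a distribution distinct from every distribution with $\btheta_p=0$; the $\stiefelpp$ family is therefore a strict submodel.

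I expect the only genuine subtlety, and the reason for the hypothesis $p\ge 3$, to reside entirely in the moment identities (\ref{eq:sogp-moment}) rather than in the linear algebra that follows. These identities fail at $p=2$: there $\sog[2]$ satisfies the linear relations $X_{11}=X_{22}$ and $X_{12}=-X_{21}$, so the second-moment matrix is not $\frac{1}{p}I$, the function $\tr(\Phi^\top X)$ can vanish for nonzero $\Phi$, and identifiability breaks down. This is precisely consistent with the fact, already noted in the excerpt, that the two families coincide when $p=2$. Once (\ref{eq:sogp-moment}) is invoked, every remaining step is routine.
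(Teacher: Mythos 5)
Your proof is correct, but it reaches the key fact by a genuinely different route from the paper. Both arguments reduce strictness to identifiability of $\Theta$, and identifiability in turn to the statement that no nontrivial affine function $X\mapsto \tr(\Phi^\top X)-c$ vanishes identically on $\sogp$. The paper establishes this by citing the general exponential-family criterion (identifiability iff ${\rm Affine}({\rm support}(\nu))=\mbR^K$, Corollary 8.1 of Barndorff-Nielsen) and then verifying ${\rm Affine}(\sogp)=\mbR^{p\times p}$ by hand: averaging the $2^{p-1}$ sign matrices $\diag(\epsilon_1,\ldots,\epsilon_{p-1},\prod_i\epsilon_i)$ to get $0$, and a further averaging plus conjugation by permutation-type rotations to get each $e_ie_j^{\top}$. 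You instead prove the same nondegeneracy in $L^2(\mu)$: the first-moment identity in (\ref{eq:sogp-moment}) kills the constant $c$, and the second-moment identity gives $E[(\tr\Phi^\top X)^2]=\tfrac{1}{p}\sum_{i,j}\Phi_{ij}^2$, forcing $\Phi=0$. Your version is shorter and leans on moment computations the paper has already recorded in Section 1.1 (which is where the hypothesis $p\ge 3$ genuinely enters, exactly as you say); the paper's version is elementary linear algebra that does not pass through any probabilistic identity and exhibits the affine hull constructively. One small wording caveat: strictness is \emph{implied} by identifiability but is not equivalent to it in general (a submodel could be strict even if the ambient parametrization were not identifiable); since you only use the implication, this does not affect the proof.
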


\begin{proof}
In general, let $K$ be a positive integer and consider a $K$-dimensional exponential family 
\begin{align*}
 p(x|\theta) = \frac{1}{C(\theta)}\exp\left(\theta^{\top}x\right)\quad
 (x\in S),\quad
 C(\theta) = \int_S \exp(\theta^{\top}x)\nu(dx),
\end{align*}
where $\theta$ is a $K$-dimensional vector, $S$ is a smooth submanifold  of $\mbR^K$ and
$\nu$ is a measure on $S$. Assume that
$C(\theta)$ exists in some open neighborhood of the origin $\theta=0$.
We call the parameter $\theta$ is estimable if $p(x|\theta)\neq p(x|\tilde\theta)$ as functions of $x$
whenever $\theta\neq\tilde\theta$.
For the exponential family, $\theta$ is estimable if and only if 
\begin{align*}
 {\rm Affine}({\rm support}(\nu))
 \ =\ \mbR^K
\end{align*}
(e.g.\ Corollary 8.1 of \cite{barndorff-nielsen}), where 
${\rm support}(\nu)$ is the support of $\nu$ and 
${\rm Affine}(U)$, $U\subset \mbR^K$, denotes the affine hull of $U$.

We now show that if $p\geq 3$ then
$
 {\rm Affine}(\sogp) = \mbR^{p\times p}
$
and therefore 
$\Theta$ is estimable, which is sufficient to prove the lemma.
Let $L={\rm Affine}(\sogp)$. 
We first see that the zero matrix $0$ belongs to $L$.
Let $\epsilon_i\in\{-1,1\}$ for $1\le i\le p$.
Then the average of $2^{p-1}$ matrices
$\diag(\epsilon_1,\ldots,\epsilon_{p-1},\prod_{i=1}^{p-1}\epsilon_i)$
in $\sogp$ is zero. Hence $0\in L$.
We now show that $e_ie_j^{\top}$ belongs to $L$ ($\forall i, j$), where
$e_i=(0,\ldots,1,\ldots,0)^{\top}$ is the
standard basis vector with $1$ as the $i$-th element.  Then
together with $0\in L$ it follows that $L=\mbR^{p\times p}$.
Take matrices $P_i\in\sogp$ ($i=1,\ldots,p$) such that $P_ie_i=e_1$.
For example, let $P_1=I_p$ and $P_i=e_1e_i^{\top}-e_ie_1^{\top}+\sum_{j\neq 
 1,i}e_je_j^{\top}$ for $i\neq 1$.
Then $e_ie_j^{\top}\in L$ if and only if
$e_1e_1^{\top}=P_ie_ie_j^{\top}P_j^{\top}\in L$.
Now it suffices to show that $e_1e_1^{\top}$ belongs to $L$.
Take the average of $2^{p-2}$ matrices $\diag(1,\epsilon_2,\ldots,\epsilon_{p-1},\prod_{i=2}^{p-1}\epsilon_i)$.
Then $e_1e_1^{\top}\in L$.
\end{proof}

For ${\cal X}=\stiefelrp$
the maximum likelihood estimate (MLE) of the Fisher distribution
is obtained by the following procedure (\cite{KhatriandMardia1977}).
Let $X^{(1)},\ldots,X^{(N)}$ be a data set
on $\stiefelrp$.
Let $\bar X=N^{-1}\sum_{t=1}^NX^{(t)}$ be the sample mean matrix
and let $\bar X=Q\diag(g_1,\ldots,g_r)R$ be the SVD of $\bar{X}$,
where $Q\in\stiefelrp$, $R\in\ogr$ and $g_1\ge \cdots\ge g_r\ge 0$.
Then the maximum likelihood estimate $\hat \Theta$ is given by
$\hat \Theta=Q\diag(\hat\phi_1,\ldots,\hat\phi_r)R$, where
$\hat\phi_i$ is the solution of
\[
 \frac{\int_{\stiefelrp}x_{ii}\exp(\sum_{k=1}^r\hat\phi_k x_{kk})\mu(dX)}
 {\int_{\stiefelrp}\exp(\sum_{k=1}^r \hat\phi_k x_{kk})\mu(dX)}
 = g_i, \quad i=1,\dots,r.
\]

This procedure is also valid for $\sogp$ if we use the sign-preserving SVD in 
(\ref{eq:svd-sop}).  We give the fact as a lemma since it is not
explicitly proved in the literature.
Remark that for $\sogp$ the normalizing constant $c(\Theta)$
in (\ref{eq:norm-const}) is invariant under
a transformation $\Theta\mapsto Q\Theta R$ for any $Q,R\in\sogp$.

\begin{lemma} \label{lemma:mle-diag}
Let $X^{(1)},\ldots,X^{(N)}$ be a data set on $\sogp$.
Let $\bar X=N^{-1}\sum_{t=1}^NX^{(t)}$ be the sample mean matrix and
$\bar X=Q\diag(g_1,\ldots,g_p)R$ be the sign-preserving SVD of $\bar X$,
where $Q,R\in\sogp$ and $|g_1|\ge g_2 \cdots\ge g_p\ge 0$.
Then the maximum likelihood estimate of the Fisher distribution on $\sogp$
is $\hat \Theta=Q\diag(\hat\phi_1,\ldots,\hat\phi_p)R$, where
$\hat\phi_i$ is the maximizer of the function
\begin{equation}
 \label{eq:objective}
 (\phi_k)_{k=1}^p\mapsto\sum_{k=1}^p\phi_k g_k - \log c(\diag(\phi_1,\ldots,\phi_p)),
\end{equation}
or equivalently, the solution of
\begin{equation}
\label{eq:likeq}
 \frac{\int_{\sogp}x_{ii}\exp(\sum_{k=1}^p \hat\phi_k x_{kk})\mu(dX)}
 {\int_{\sogp}\exp(\sum_{k=1}^p\hat\phi_k x_{kk})\mu(dX)}
 = g_i, \quad i=1,\dots,p.
\end{equation}
\end{lemma}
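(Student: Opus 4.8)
The plan is to bypass any direct optimization over the frame of $\Theta$ and instead characterize the MLE by the moment (score) equation of the exponential family, then exhibit the claimed $\hat\Theta$ as its solution. Writing the log-likelihood of $X^{(1)},\dots,X^{(N)}$ as
\[
 \ell(\Theta) \;=\; N\bigl(\tr(\Theta^\top\bar X)-\log c(\Theta)\bigr),
\]
I would first record that $\log c(\Theta)=\log\int_{\sogp}\exp(\tr(\Theta^\top X))\,\mu(dX)$ is the cumulant generating function of $X$ under $\mu$ and is therefore convex, so $\ell$ is concave on $\mbR^{p\times p}$ and every stationary point is a global maximizer. Differentiating under the integral sign yields $\nabla_\Theta\log c(\Theta)=E_\Theta[X]$, whence $\hat\Theta$ maximizes $\ell$ if and only if it satisfies the moment-matching equation $E_{\hat\Theta}[X]=\bar X$ (we assume throughout that the maximum is attained, equivalently that $\bar X$ lies in the interior of the convex hull of $\sogp$). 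This is the route used for the Stiefel case in \cite{KhatriandMardia1977}, adapted to the $\sogp$ action.

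The second ingredient is the equivariance of the model. For $Q,R\in\sogp$ the map $X\mapsto QXR$ preserves $\sogp$ and the uniform measure $\mu$, and by the cyclic invariance of the trace it carries the density with parameter $\diag(\phi)$ to the density with parameter $Q\,\diag(\phi)\,R$; this is the distributional sharpening of the invariance $c(Q\Theta R)=c(\Theta)$ noted before the lemma. Taking expectations gives
\[
 E_{Q\,\diag(\phi)\,R}[X] \;=\; Q\,E_{\diag(\phi)}[X]\,R .
\]
The crucial structural point is that $E_{\diag(\phi)}[X]$ is itself \emph{diagonal}. I would prove this by a sign-flip symmetry: for $D=\diag(d_1,\dots,d_p)$ with $d_k\in\{-1,1\}$ and $\det D=1$ we have $D\in\sogp$, and the map $X\mapsto DXD$ fixes both $\mu$ and the density $\exp(\sum_k\phi_k x_{kk})$; hence $E_{\diag(\phi)}[X_{ij}]=d_id_j\,E_{\diag(\phi)}[X_{ij}]$, and for $i\ne j$ one may choose $D$ with $d_id_j=-1$ and an even number of $-1$'s (available once $p\ge 3$; the case $p=2$ is checked directly from the transpose symmetry $X\mapsto X^\top$, under which the off-diagonal entries change sign). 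Thus all off-diagonal expectations vanish.

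Combining the pieces, set $\hat\Theta=Q\,\diag(\hat\phi_1,\dots,\hat\phi_p)\,R$. By equivariance and the diagonal-mean fact,
\[
 E_{\hat\Theta}[X] \;=\; Q\,\diag\!\bigl(E_{\diag(\hat\phi)}[X_{11}],\dots,E_{\diag(\hat\phi)}[X_{pp}]\bigr)\,R ,
\]
which equals $\bar X=Q\,\diag(g_1,\dots,g_p)\,R$ precisely when $E_{\diag(\hat\phi)}[X_{ii}]=g_i$ for all $i$, i.e.\ when $\hat\phi$ solves \eqref{eq:likeq}. Such an $\hat\Theta$ is then a stationary point of the concave $\ell$ and hence the MLE. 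Finally, the equivalence of the two descriptions of $\hat\phi$ is immediate from concavity: the objective \eqref{eq:objective} is concave in $\phi$ (a linear function minus the convex $\log c(\diag\phi)$), and its gradient has $i$-th component $g_i-E_{\diag(\phi)}[X_{ii}]$, so its stationary (hence maximizing) points are exactly the solutions of \eqref{eq:likeq}.

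I expect the diagonal-mean step to be the main obstacle, and it is the step that explains the hypothesis of the lemma: the sign-flip symmetry and the equivariance both require $Q,R,D\in\sogp$ rather than merely $\ogp$. An ordinary SVD of $\bar X$ could place $Q$ or $R$ in the non-identity component of $\ogp$, breaking the $\sogp$-equivariance and the measure-preservation; using the sign-preserving SVD of \eqref{eq:svd-sop} keeps $Q,R\in\sogp$ and is therefore exactly what makes the argument go through.
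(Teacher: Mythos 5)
Your argument is correct and is essentially the paper's proof in different clothing: the moment-matching condition $E_{\hat\Theta}[X]=\bar X$ combined with equivariance is exactly the paper's reparametrization $\Phi=Q^\top\Theta R^\top$ followed by checking that the off-diagonal derivatives $\partial\log c(\Phi)/\partial\phi_{ij}=E_{\Phi}[x_{ij}]$ vanish at diagonal $\Phi$, and both proofs establish that vanishing by the same two-sided sign-flip symmetry. The only cosmetic difference is that the paper conjugates by $\diag(1,\dots,-1,\dots,1)$ on both sides (each factor has determinant $-1$, but the two-sided action still preserves $\sogp$ and $\mu$), so it needs no separate treatment of $p=2$, whereas your insistence on $D\in\sogp$ forces the extra transpose argument in that case.
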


\begin{proof}
We change the parameter variable from $\Theta$
to $\Phi=(\phi_{ij})_{i,j=1}^p=Q^\top\Theta R^\top$.
Then the ($1/N$ times) log-likelihood function is written
as
\begin{equation}
\label{eq:likeq-sogp}
\tr(\Theta^{\top}\bar{X}) - \log c(\Theta)
=\tr(\Phi^{\top}G) - \log c(\Phi),
\end{equation}
where $G=\diag(g_1,\ldots,g_p)$.
Since (\ref{eq:likeq-sogp}) is strictly convex in $\Phi$,
the unique maximizer makes its first-order derivatives zero.
Note that the first term on the right hand side of (\ref{eq:likeq-sogp})
does not depend on the off-diagonal elements of 
$\Phi$.  Therefore the condition for 
maximization of (\ref{eq:likeq-sogp})
with respect to an off-diagonal element is written as
\begin{align}
 0 = \frac{\partial}{\partial \phi_{ij}}\log c(\Phi),
 \quad (i\neq j)
 \label{eq:mle-off-diagonal}.
\end{align}
We now fix $i\neq j$ and evaluate $(\partial/\partial \phi_{ij})\log c(\Phi)$
at $(\phi_{i'j'})_{i'\neq j'}=0$.  Then we have
\begin{align*}
 \left.
 \frac{\partial}{\partial \phi_{ij}}\log 
 c(\Phi)
 \right|_{\phi_{i'j'}=0\ \forall i'\neq j'}
 \ =\ \frac{\int_{\sogp}x_{ij}\exp(\sum_{k=1}^p \phi_{kk}x_{kk})\mu(dX)}
 {\int_{\sogp}\exp(\sum_{k=1}^p\phi_{kk}x_{kk})\mu(dX)}.
\end{align*}
However
\begin{align*}
 \int_{\sogp}x_{ij}\exp(\sum_{k=1}^p\phi_{kk}x_{kk})\mu(dX)
 \ =\ \int_{\sogp}(-x_{ij})\exp(\sum_{k=1}^p \phi_{kk}x_{kk})\mu(dX)
 \ =\ 0
\end{align*}
because the uniform distribution $\mu$ on $\sogp$ is invariant
with respect to multiplication of the $i$-th row and the $i$-th (not $j$-th)
column of $X$ by $-1$, which transforms $x_{ij}$ into $-x_{ij}$ and $x_{kk}$ into $x_{kk}$ for every $k$.
Therefore any diagonal matrix $\Phi$ satisfies
(\ref{eq:mle-off-diagonal}).
The log-likelihood function of the diagonal matrix is (\ref{eq:objective}) and
the maximizer satisfies (\ref{eq:likeq}).
\end{proof}

When $\det \bar X < 0$, 
it is not correct to use the ordinary singular values of $\bar X$ on the
right-hand side of  (\ref{eq:likeq}).

\begin{remark} \label{remark:sogp-asymp}
The determinant of the sample mean matrix $\bar X$
is not necessarily positive even if all $X^{(t)}$, $t=1,\dots,N$, are in $\sogp$.
Indeed for the case of uniform distribution on $\sogp$ we prove 
\[
P(\det\bar X<0) \rightarrow \frac{1}{2}, \qquad (N\rightarrow\infty),
\]
as long as $p\geq 3$. By the central limit theorem
$\sqrt{N}(\bar X-E(X))$ converges to a Gaussian random matrix $Z$
with the same covariances as $X$.
From (\ref{eq:sogp-moment}), we know that $E(X)=0$ and
the covariances of $X$ are diagonal when $p\geq 3$.
Then $Z$ and any sign change of a column of $Z$ have the same probability distribution
and therefore the probability of $\det(Z)<0$ is $1/2$.
Hence the probability of $\det(\bar X)<0$ converges to $1/2$.
\end{remark}

\begin{remark}
 Even if $\det\bar{X}>0$,
 the determinant of the estimated parameter $\hat{\Theta}$ may be negative.
 Indeed, let the sign-preserving singular values of $\bar{X}$ and $\hat{\Theta}$
 be $g=(g_1,g_2,g_3)$ and
 $\hat{\phi}=(\hat{\phi}_1,\hat{\phi}_2,\hat{\phi}_3)$, respectively.
 We prove that $g_1g_2g_3$ and $\hat{\phi}_1\hat{\phi}_2\hat{\phi}_3$
 can have the opposite signs.
 To see this, we first consider the case $\hat{\phi}_1=0$,
 $\hat{\phi}_2>0$ and $\hat{\phi}_3>0$.
 Then, by using the Taylor expansion formulas (\ref{eq:ctheta-expansion}) 
 and (\ref{eq:eklm}) developed in Subsection~\ref{subsec:series-expansion},
 we deduce that $g_1$, $g_2$ and $g_3$ are strictly positive.
 By continuity, there exist some $\hat{\phi}_1<0$, $\hat{\phi}_2>0$ and $\hat{\phi}_3>0$
 while all $g_i$'s are positive.
\end{remark}

\section{Computation of the normalizing constant and its derivatives}
\label{sec:two-methods}
For computing the maximum likelihood estimate of Fisher distribution we need
numerical evaluation of the normalizing constant  $c(\bTheta)$ of 
(\ref{eq:norm-const}) and its derivatives.  In this section we study two
methods for this purpose.  The first method is the holonomic gradient
descent.
In the second method, we use 
series expansion of $\etr(\bTheta^\top \bX)$.
The second method is also used to
compute the initial value of HGD.

\subsection{The holonomic gradient descent for Stiefel manifolds and special orthogonal group}
\label{subsec:holonomic}

Let us briefly describe the holonomic gradient descent.
As to details, we refer to \cite{hgd}.
An algebraic computation is the first step;
we construct linear ODE's (ordinary differential equations) satisfied by $c(\Theta)$
with respect to each $\theta_{ij}$ by Gr\"obner bases of a set of partial differential
equations satisfied by $c$.
Variables other than $\theta_{ij}$ appear as parameters in the ODE.
The rank of ODE's is called the holonomic rank.
The ODE's give a dynamical system for the function
$c(\Theta)\etr(-\Theta^{\top}\bar X)$,
the reciprocal of the likelihood.
The gradient of the function can also be expressed in terms of derivatives
of the reciprocal standing for the standard monomials. 
The second step is a numerical procedure;
a point in the dynamical system moves
toward the maximum likelihood estimate
along the gradient direction,
simultaneously updating the values of $c(\Theta)$ and its derivatives.

For the holonomic gradient descent, we 
study differential operators $A$ annihilating $c(\Theta)$,
that is, $A\cdot c(\Theta)=0$.
Denote 
the differential operator
$\Rd/\Rd \theta_{ij}$ by $\Rd_{ij}$.  We first study the special orthogonal group
and then study the Stiefel manifold.

\subsubsection{The case of special orthogonal group}
Let $\Theta\in\mathbb{R}^{p\times p}$.
We consider the following three types of differential operators:
\begin{align*}
 A_{ij}^{(1)} &= \sum_{k=1}^p \Rd_{ik}\Rd_{jk}-\delta_{ij},
 \quad \tilde A_{ij}^{(1)} = \sum_{k=1}^p \Rd_{ki}\Rd_{kj}-\delta_{ij}\quad (i\leq j),
 \\
 A^{(2)} &= \det(\Rd_{ij}) - 1,
 \\
 A_{ij}^{(3)} &= \sum_{k=1}^p 
 \left(-\theta_{jk}\Rd_{ik}+\theta_{ik}\Rd_{jk}\right),
 \quad \tilde A_{ij}^{(3)} = \sum_{k=1}^p
 \left(-\theta_{kj}\Rd_{ki}+\theta_{ki}\Rd_{kj}\right)\quad (i<j),
\end{align*}
where $\delta_{ij}$ is the Kronecker's delta. 
The following lemma is an analog of Theorem 2 of \cite{hgd}.

\begin{lemma} \label{lemma:sogp-annihilate}
 The above differential operators annihilate $c(\Theta)$ of $\sogp$.
\end{lemma}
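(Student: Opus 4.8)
The plan is to reduce every operator to the single observation that differentiation in $\theta_{ij}$ amounts to multiplication by the coordinate $x_{ij}$ underneath the integral. Since $\tr(\Theta^\top X)=\sum_{i,j}\theta_{ij}x_{ij}$, we have $\Rd_{ij}\etr(\Theta^\top X)=x_{ij}\etr(\Theta^\top X)$, and because $\sogp$ is compact the integrand together with all of its $\theta$-derivatives is continuous and bounded on $\sogp$ for each fixed $\Theta$, so we may differentiate under the integral sign. Consequently, for any polynomial $P$ in the $p^2$ commuting indeterminates $\Rd_{ij}$,
\[
P(\Rd)\,c(\Theta)=\int_{\sogp}P(x)\,\etr(\Theta^\top X)\,\mu(dX),
\]
where $P(x)$ is the same polynomial evaluated at the entries $x_{ij}$ of $X$. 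First I would record this identity, since the operators $A^{(1)}$, $\tilde A^{(1)}$ and $A^{(2)}$ are exactly of this constant-coefficient form after subtracting a scalar.

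The type (1) and type (2) operators then follow from the defining relations of $\sogp$. For $A_{ij}^{(1)}$ the associated polynomial is $\sum_k x_{ik}x_{jk}=(XX^\top)_{ij}$, which equals $\delta_{ij}$ on $\sogp\subset\ogp$; hence $\sum_k\Rd_{ik}\Rd_{jk}\,c=\delta_{ij}c$, i.e.\ $A_{ij}^{(1)}c=0$, and $\tilde A_{ij}^{(1)}c=0$ comes identically from $X^\top X=I_p$. For $A^{(2)}$ the point to make explicit is that the $\Rd_{ij}$ commute, so the symbol $\det(\Rd_{ij})$ is an unambiguous differential operator, equal to $\sum_{\sigma}\sgn(\sigma)\prod_i\Rd_{i\sigma(i)}$; applying the displayed identity with $P(x)=\det X$ and using $\det X=1$ on $\sogp$ gives $\det(\Rd)\,c=c$, that is $A^{(2)}c=0$.

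For the type (3) operators I would instead use bi-invariance of $c$. For $Q,R\in\sogp$ the map $X\mapsto Q^\top XR^\top$ is a $\mu$-preserving bijection of $\sogp$ (left and right invariance of Haar measure), and $\tr(\Theta^\top Q^\top XR^\top)=\tr((Q\Theta R)^\top X)$, so $c(Q\Theta R)=c(\Theta)$. Differentiating $c(e^{tE}\Theta)$ at $t=0$ along $E=e_ae_b^\top-e_be_a^\top$, an antisymmetric $p\times p$ matrix, and using the chain rule with $\Rd_{ij}c=\int x_{ij}\etr(\Theta^\top X)\mu(dX)$, gives $0=\sum_{i,j}(E\Theta)_{ij}\Rd_{ij}c=\sum_k(\theta_{bk}\Rd_{ak}-\theta_{ak}\Rd_{bk})c=-A_{ab}^{(3)}c$; the right-multiplication variation $\Theta\mapsto\Theta e^{tE}$ yields $\tilde A_{ab}^{(3)}c=0$ in the same way.

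The genuinely new ingredient compared with the Stiefel case of Theorem 2 of \cite{hgd} is the determinant operator $A^{(2)}$, so I expect the main obstacle to lie there: one must be careful that $\det(\Rd_{ij})$ is read as the honest permutation (cofactor) expansion — legitimate precisely because the $\Rd_{ij}$ commute — and that this expansion, pushed through the integral by the displayed identity, reproduces $\det X$ rather than some non-commutative variant. The remaining points, namely interchanging derivative and integral (clear from compactness of $\sogp$) and matching signs in the type (3) computation, are routine bookkeeping.
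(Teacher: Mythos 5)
Your proof is correct and follows essentially the same route as the paper: the type (1) and determinant operators are handled by observing that $\Rd_{ij}$ acts as multiplication by $x_{ij}$ on $\etr(\Theta^\top X)$ and invoking $XX^\top=X^\top X=I$ and $\det X=1$, while the type (3) operators come from differentiating the bi-invariance $c(Q\Theta R)=c(\Theta)$ along one-parameter rotation subgroups. The extra care you take about differentiation under the integral sign and the commutativity underlying $\det(\Rd_{ij})$ is sound but does not change the substance of the argument.
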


\begin{proof}
 We first prove that the operators $A_{ij}^{(1)}$, $\tilde A_{ij}^{(1)}$ and $A^{(2)}$
 annihilate $\etr(\Theta^{\top}X)$ for any $X\in\sogp$.
 Then they also annihilate $c(\Theta)$ because
 $A\cdot c(\Theta)=\int_{\sogp}A\cdot \etr(\Theta^{\top}X)\mu(dX)$
 for any operator $A$.
 Since $\Rd_{ij}\cdot\etr(\Theta^{\top}X)=x_{ij}\etr(\Theta^{\top}X)$
 and $XX^{\top}=I$, we have
 \begin{align*}
  A_{ij}^{(1)}\cdot\etr(\Theta^{\top}X)
  &= \left(\sum_{k=1}^p x_{ik} x_{jk}-\delta_{ij}\right)\etr(\Theta^{\top}X)
  = 0.
 \end{align*}
 Similarly, we obtain $\tilde A_{ij}^{(1)}\cdot\etr(\Theta^{\top}X)=0$ from $X^{\top}X=I$
 and $A^{(2)}\cdot\etr(\Theta^{\top}X)=0$ from $\det(X)=1$.

 Next consider $A_{ij}^{(3)}$ and $\tilde A_{ij}^{(3)}$.
 We note $c(\Theta)=c(Q\Theta)=c(\Theta Q)$ for any $Q\in\sogp$.
 For any fixed $i<j$, define a rotation matrix $Q=Q(\epsilon)$ by
 \begin{align*}
  Q
  &= (\cos\epsilon)(E_{ii}+E_{jj}) + (\sin\epsilon)(-E_{ij}+E_{ji})
  + \sum_{k\neq i,j}E_{kk},
 \end{align*}
 where $E_{kl}$ is the matrix whose $(i,j)$-th component
 is 1 if $k=i$ and $l=j$ and 0 otherwise.
 Then
 \begin{align*}
  0
  &= c(Q\Theta) - c(\Theta)
  \\
  &= c\left(\Theta - \epsilon\sum_k\theta_{jk}E_{ik}+\epsilon\sum_k\theta_{ik}E_{jk}+\mathrm{o}(\epsilon)\right) - c(\Theta)
  \\
  &= \epsilon\sum_{k=1}^p(-\theta_{jk}\Rd_{ik}+\theta_{ik}\Rd_{jk})\cdot c(\Theta)+\mathrm{o}(\epsilon),
 \end{align*}
 as $\epsilon\to 0$.
 Hence we have $A_{ij}^{(3)}\cdot c(\Theta)=0$.
 Similarly we obtain $\tilde A_{ij}^{(3)}\cdot c(\Theta)=0$ from $c(\Theta Q)=c(\Theta)$.
\end{proof}

Let $D$ be the ring of differential operators
with polynomial coefficients in $\theta_{ij}$
and let $I$ denote the ideal generated by the above
differential operators $A_{ij}^{(1)},\ldots,\tilde{A}_{ij}^{(3)}$ in $D$.
Also let
$I_{\rm diag}$ denote $I$ restricted to
diagonal matrices
$\Theta=\mathrm{diag}(\theta_{11},\ldots,\theta_{pp})$.
$I\cdot f(\Theta)=0$ implies $I_{\rm diag}\cdot
f(\mathrm{diag}(\Theta))=0$.
We denote by $R_p$ the ring of differential operators with rational
function coefficients in $\theta_{ij}$, $1 \leq i,j \leq p$.

The following propositions are essential for the holonomic gradient descent.
We refer to \cite{hgd} for the definition of holonomic ideals in $D$
and zero-dimensional ideals in $R_p$.
Once zero-dimensionality of $R_pI$ is proved and a Gr\"obner basis is constructed,
we can find ODE's and apply the holonomic gradient descent 
for the maximum likelihood estimate.

\begin{proposition}
If $p=2$, then the ideal $I$ is holonomic.
In particular, the ideal $R_2 I$ is zero-dimensional.
The holonomic rank is equal to $2$.
\end{proposition}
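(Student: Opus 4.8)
The plan is to use the fact that for $p=2$ the normalizing constant is elementary enough to be written in closed form, read off the expected rank from it, and then confirm that the \emph{listed} generators (not merely the full annihilator of $c$) already realize that rank and holonomicity.

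First I would evaluate $c(\Theta)$ directly. Parametrizing $SO(2)$ by the rotation angle $t$ gives $\tr(\Theta^{\top}X)=(\theta_{11}+\theta_{22})\cos t+(\theta_{21}-\theta_{12})\sin t$, so that
\begin{equation*}
 c(\Theta)=\frac{1}{2\pi}\int_0^{2\pi}\exp\bigl((\theta_{11}+\theta_{22})\cos t+(\theta_{21}-\theta_{12})\sin t\bigr)\,dt = I_0(r),
\end{equation*}
where $r=\sqrt{(\theta_{11}+\theta_{22})^2+(\theta_{21}-\theta_{12})^2}$ and $I_0$ is the modified Bessel function of the first kind of order $0$. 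Because $I_0$ obeys a second-order ODE whose solution space is spanned by $I_0$ and $K_0$, the holonomic rank ought to be $2$; this is the number the ideal $I$ has to reproduce, and by Lemma~\ref{lemma:sogp-annihilate} we already know $I\subseteq\mathrm{Ann}(c)$.

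To prove that $R_2 I$ is zero-dimensional of rank $2$ I would reduce every monomial in the $\Rd_{ij}$, modulo $R_2 I$, into the $K$-span of $\{1,\Rd_{11}\}$, with $K=\mbR(\theta_{11},\theta_{12},\theta_{21},\theta_{22})$. Writing $c=g(u)$ with $u=r^2$, the chain rule shows that the first-order annihilators of $c$ form the $3$-dimensional $K$-space cut out by $(\theta_{11}+\theta_{22})(a_{11}+a_{22})+(\theta_{21}-\theta_{12})(a_{21}-a_{12})=0$, which contains $\Rd_{11}-\Rd_{22}$, $\Rd_{12}+\Rd_{21}$ and $(\theta_{21}-\theta_{12})\Rd_{11}-(\theta_{11}+\theta_{22})\Rd_{21}$; the claim to be checked is that this entire space lies in $R_2 I$, so that only $\Rd_{11}$ survives at first order. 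The two operators $A_{12}^{(3)},\tilde A_{12}^{(3)}$ already supply two of these relations, and the remaining one is produced in $R_2 I$ upon combining them with the second-order generators, a point the Gr\"obner basis verifies. For the second order I would use that $g(u)=I_0(\sqrt u)$ satisfies $4u\,g''+4g'-g=0$; expressing $g$ and $g'$ through $c$ and $\Rd_{11}c$ turns this into a congruence $\Rd_{11}^2\equiv\alpha(\theta)+\beta(\theta)\,\Rd_{11}\pmod{R_2 I}$ with $\alpha,\beta\in K$, which is the content of the type-(1) and type-(2) operators. Iterating these two reductions sends every $\Rd$-monomial into $\{1,\Rd_{11}\}$, giving $\dim_K R_2/R_2 I\le 2$. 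The reverse bound is immediate, since $c=I_0(r)$ and $\Rd_{11}c$ are linearly independent over $K$ (a nonzero first-order relation with rational coefficients would force $c$ to be elementary), so the rank is exactly $2$, which settles both the zero-dimensionality of $R_2 I$ and the rank statement.

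In practice I would carry out the two reductions uniformly by computing a Gr\"obner basis of $R_2 I$ for a term order on the $\Rd_{ij}$ and reading the standard monomials $\{1,\Rd_{11}\}$ directly; the by-hand analysis above is mainly a check that the Gr\"obner output must look this way. The hard part is the holonomicity of $I$ in $D$, which does \emph{not} follow formally from finite holonomic rank: $R_2 I$ localizes at a generic $\theta$, so components of the characteristic variety lying over the singular locus could in principle push its dimension above the minimal (Bernstein) value $p^2=4$. I would therefore compute a Gr\"obner basis of the associated graded ideal of $I$ in the commutative polynomial ring and verify that the characteristic variety of $D/I$ has dimension exactly $4$, which certifies that $D/I$ is holonomic. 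This is the step that genuinely uses all of $A^{(1)},\tilde A^{(1)},A^{(2)},A^{(3)},\tilde A^{(3)}$ together, rather than just the explicit Bessel form of $c$, since a proper subideal of $\mathrm{Ann}(c)$ could well fail to be holonomic even though $c$ itself is a holonomic function.
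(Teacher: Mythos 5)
Your proposal is correct in substance, but it takes a genuinely different route from the paper, whose entire proof of this proposition is the one sentence that the claims are ``proved by Macaulay2 and the yang package on Risa/Asir by utilizing Gr\"obner basis computations in rings of differential operators'' (the displayed generators $g_1,\dots,g_6$ being output of the {\tt nk\_restriction} integration algorithm). What you add, and the paper does not use, is the closed form $c(\Theta)=I_0\bigl(\sqrt{(\theta_{11}+\theta_{22})^2+(\theta_{21}-\theta_{12})^2}\,\bigr)$, which is correct and consistent with the paper's observation that for $p=2$ the $SO(2)$ and $V_1(\mbR^2)$ families coincide; this buys you an a priori prediction of the rank and, more importantly, the lower bound $\dim_K R_2/R_2I\ge 2$, which a pure ``reduce all monomials to $\{1,\Rd_{11}\}$'' computation would still need. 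Your upper-bound reduction is sound in outline (the first-order annihilator space and the second-order congruence from $4ug''+4g'-g=0$ check out), though, as you concede, producing the third first-order relation and the $\Rd_{11}^2$ congruence from the listed generators ultimately falls back on the same Gr\"obner computation the paper performs. You are also right --- and usefully more explicit than the paper --- that holonomicity of $I$ in $D$ does not follow formally from zero-dimensionality of $R_2I$ (only the Weyl closure is forced to be holonomic), so the characteristic-variety dimension check is a genuinely separate step; the paper's Macaulay2 verification covers it silently. The one point you should tighten is the parenthetical justification of linear independence of $c$ and $\Rd_{11}c$ over $K$: ``would force $c$ to be elementary'' is too vague. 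A clean argument is that such a relation restricted to the diagonal line would make $I_0'/I_0$ a rational function of $r$, which is impossible because $I_0(ix)=J_0(x)$ has infinitely many zeros, so $I_0'/I_0$ has infinitely many poles.
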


The proposition is proved by Macaulay2 (\cite{M2})
and the yang package on Risa/Asir (\cite{asir})
by utilizing Gr\"obner basis computations in rings of differential operators.
Also the set of generators of $I$ is obtained by nk\_restriction function of asir 
from the integral representation of $c(\Theta)$
 as
\begin{align*}
 & g_1 = -\Rd_{12}-\Rd_{21},
 \quad g_2 = -\Rd_{11}+\Rd_{22},
 \quad g_3 = \Rd_{21}^2+\Rd_{22}^2-1,
 \\
 & g_4=(\theta_{22}+\theta_{11})\Rd_{21}+(-\theta_{21}+\theta_{12})\Rd_{22},
 \\
 & g_5=(\theta_{21}-\theta_{12})\Rd_{22}\Rd_{21}+(\theta_{22}+\theta_{11})\Rd_{22}^2+\Rd_{22}-\theta_{22}-\theta_{11},
 \\
 & g_6=(-\theta_{21}+\theta_{12})\Rd_{21}+(\theta_{21}^2-2\theta_{12}\theta_{21}+\theta_{22}^2+2\theta_{11}\theta_{22}+\theta_{11}^2+\theta_{12}^2)
 \Rd_{22}^2
 \\
 &\quad +(\theta_{22}+\theta_{11})\Rd_{22}-\theta_{22}^2-2\theta_{11}\theta_{22}-\theta_{11}^2.
\end{align*}
Furthermore the set of generators of $I_{\rm diag}$ is given as
\begin{align*}
 h_1 = (-\theta_{22}-\theta_{11})\Rd_{11}^2-\Rd_{11}+\theta_{22}+\theta_{11},
 \quad h_2 = -\Rd_{11}+\Rd_{22}.
\end{align*}

\begin{proposition} \label{prop:2}
If $p=3$, then the ideal $R_3 I$ is zero-dimensional.
The holonomic rank is less than or equal to $4$.
$R_3/(R_3I)$ is spanned by $1, \partial_{31}, \partial_{32}, \partial_{33}$
as a vector space over the field of rational functions. 
\end{proposition}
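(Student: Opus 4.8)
The plan is to prove all three assertions at once by exhibiting, for every differential monomial in the nine operators $\partial_{ij}$, a reduction modulo $R_3 I$ to a $\mathbb{C}(\theta)$-linear combination of $1,\partial_{31},\partial_{32},\partial_{33}$. This is precisely a normal-form (Gr\"obner basis) computation in the rational Weyl algebra $R_3$, and it is convenient to organize it according to the orders of the generators of $I$: the first-order operators $A^{(3)}_{ij},\tilde A^{(3)}_{ij}$ first, then the second-order operators $A^{(1)}_{ij},\tilde A^{(1)}_{ij}$ (the third-order operator $A^{(2)}$ need not enter the upper-bound argument). Once such a reduction is available for every monomial, the residues $1,\partial_{31},\partial_{32},\partial_{33}$ span $R_3/(R_3I)$ over $\mathbb{C}(\theta)$, which yields at once the spanning claim, the bound $\mathrm{rank}\le 4$, and the zero-dimensionality of $R_3 I$.

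First I would treat the first-order operators. Writing $\Xi=(\partial_{ij})$ formally as a matrix of symbols, the six operators $A^{(3)}_{ij},\tilde A^{(3)}_{ij}$ ($1\le i<j\le 3$) are exactly the off-diagonal ($i<j$) entries of the conditions that $\Xi\Theta^\top$ and $\Theta^\top\Xi$ be symmetric (the relevant commutators $[\partial_{ik},\theta_{jk}]=\delta_{ij}$ vanish for $i\neq j$), and hence are $\mathbb{C}(\theta)$-linear forms in the nine $\partial_{ij}$ with no constant term. I would show that the $6\times 6$ coefficient matrix obtained by regarding these six forms as equations in the six variables $\partial_{11},\partial_{12},\partial_{13},\partial_{21},\partial_{22},\partial_{23}$ has a determinant that is a nonzero element of $\mathbb{C}(\theta)$. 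Consequently, modulo $R_3 I$, each of these six derivatives is expressed as a $\mathbb{C}(\theta)$-combination of $\partial_{31},\partial_{32},\partial_{33}$, confining the first-order part of the quotient to the span of $\partial_{31},\partial_{32},\partial_{33}$.

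Next I would eliminate the second-order monomials. Substituting the six Stage-one expressions into the twelve second-order operators $A^{(1)}_{ij},\tilde A^{(1)}_{ij}$ amounts to left multiplication inside $R_3$; each commutation $[\partial_{ij},\theta_{kl}]=\delta_{ik}\delta_{jl}$ contributes only lower-order (first-order or constant) correction terms, so after substitution each relation becomes an identity among the six quadratic monomials $\partial_{3i}\partial_{3j}$ ($1\le i\le j\le 3$) and terms of order $\le 1$. The diagonal operators $A^{(1)}_{ii}$ supply the inhomogeneous constant $-1$ that produces the residue $1$. I would then check that the resulting linear system in these six quadratic monomials has generic rank $6$ over $\mathbb{C}(\theta)$, so that each $\partial_{3i}\partial_{3j}$ reduces modulo $R_3 I$ to the span of $1,\partial_{31},\partial_{32},\partial_{33}$. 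A straightforward induction on total degree, applying $\partial_{3k}$ to an already reduced expression and reducing the resulting degree-two terms, then shows that every monomial of degree $\ge 2$ in $\partial_{31},\partial_{32},\partial_{33}$ reduces as well; combined with Stage one, this covers every monomial in all nine variables.

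The main obstacle is the two genericity verifications, namely that the Stage-one $6\times6$ coefficient determinant and the Stage-two $6\times6$ system in the quadratic monomials are both of full rank as rational functions of $\Theta$, together with the careful bookkeeping of the non-commutative commutator corrections so that no spurious collapse (such as an identity forcing $1\equiv 0$) occurs. These finite rank conditions over $\mathbb{C}(\theta)$ are exactly what is certified by a Gr\"obner basis of $R_3 I$ in the rational Weyl algebra, so in practice I would confirm them with the same Macaulay2 and Risa/Asir computations used for the case $p=2$; the hand argument above serves to reduce the infinite-dimensional reduction problem to these explicit generic linear-algebra checks and to identify $1,\partial_{31},\partial_{32},\partial_{33}$ as the resulting standard monomials.
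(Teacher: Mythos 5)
Your Stage one is sound: the six first--order operators do, for generic $\Theta$, let you solve for $\partial_{11},\dots,\partial_{23}$ in terms of $\partial_{31},\partial_{32},\partial_{33}$ (the $6\times 6$ determinant is a nonzero polynomial in $\Theta$, although, somewhat deceptively, it vanishes on the whole diagonal locus $\Theta=\diag(\theta_{11},\theta_{22},\theta_{33})$, so one must test it at a genuinely non-diagonal point). The gap is in Stage two: the claim that the twelve second--order generators give a system of generic rank $6$ in the six monomials $\partial_{3i}\partial_{3j}$ is false. To see this, work with the symbols $\Xi=(\xi_{ij})$. The solution space of the six linear symbol relations (``$\Xi\Theta^{\top}$ and $\Theta^{\top}\Xi$ symmetric'') is, for generic $\Theta$, exactly $\{\Theta^{-\top}S : S \text{ symmetric},\ S(\Theta^{\top}\Theta)=(\Theta^{\top}\Theta)S\}$. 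Writing $\Theta^{\top}\Theta=U\Lambda U^{\top}$ and $S=U\diag(s_1,s_2,s_3)U^{\top}$, one computes $\Xi\Xi^{\top}=\Theta^{-\top}U\diag(s_k^2)U^{\top}\Theta^{-1}$ and $\Xi^{\top}\Xi=U\diag(s_k^2/\lambda_k)U^{\top}$, so every one of the twelve restricted quadrics is a combination of $s_1^2,s_2^2,s_3^2$ only: they span a $3$-dimensional, not $6$-dimensional, space of quadratic forms. The cross terms $s_is_j$ ($i\neq j$) are not reduced, and even throwing in $\det\Xi=\det(\Theta)^{-1}s_1s_2s_3$ (the symbol of $A^{(2)}$) only removes the cubic monomial, leaving a quotient of dimension $7$ with basis $1,s_1,s_2,s_3,s_1s_2,s_1s_3,s_2s_3$. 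So reduction by the leading symbols of the listed generators can never prove rank $\leq 4$; it caps out at $7$.

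The missing idea is that $R_3I$ contains elements whose leading symbols are precisely those cross terms, but these arise only from genuinely non-commutative S-pair (syzygy) reductions among the generators, not from substituting the first--order relations into the second--order ones. This is exactly the phenomenon the paper confronts in Theorem \ref{th:extraeq}: the operators $\ell_{ij}$ of (\ref{eq:extraeq}) have leading term $(\theta_{ii}^2-\theta_{jj}^2)\partial_{ii}\partial_{jj}$ -- a cross term -- and producing them requires operator-valued coefficients $a_{ij},b,c_{ij},\dots$ found by undetermined coefficients or a syzygy computation, which is why the paper's proof of Proposition \ref{prop:2} is an honest large-scale Gr\"obner basis computation rather than the two generic linear-algebra checks you propose. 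Your reduction framework is the right language, but to close the argument you would have to exhibit (or let the Gr\"obner engine exhibit) three additional ideal elements with the cross-term symbols; as written, your Stage two check would simply fail when you ran it.
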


The proposition is proved by a large scale computation on Risa/Asir with Gr\"obner bases.
The algorithm for it is explained in, e.g., \cite{hgd}.
Programs and obtained data are at the website OpenXM/Math (\cite{openxmmath}).
We conjecture that $I$ is holonomic and consequently
$R_p I$ is zero-dimensional for any $p$ in the case of $SO(p)$.

\subsubsection{The case of Stiefel manifold}

Let $\Theta\in\mathbb{R}^{p\times r}$ ($r\leq p$).
Consider the following differential operators:
\begin{align*}
 A_{ij}^{(1)} &= \sum_{k=1}^p \Rd_{ki}\Rd_{kj}-\delta_{ij}
 \quad (1\leq i\leq j\leq r),
 \\
 A_{ij}^{(2)} &= \sum_{k=1}^r\left(-\theta_{jk}\Rd_{ik}+\theta_{ik}\Rd_{jk}\right)
 \quad (1\leq i<j\leq p),
 \\
 \tilde A_{ij}^{(2)} &= \sum_{k=1}^p \left(-\theta_{kj}\Rd_{ki}+\theta_{ki}\Rd_{kj}\right)
 \quad (1\leq i<j\leq r).
\end{align*}

\begin{lemma}
\label{lem:st-diff-operators}
 The above operators annihilate $c(\Theta)$ of $\stiefelrp$.
\end{lemma}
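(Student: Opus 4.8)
The plan is to follow the same two-pronged strategy used for Lemma~\ref{lemma:sogp-annihilate}. The operators split into one quadratic family $A_{ij}^{(1)}$, which I will show annihilates the integrand $\etr(\Theta^\top X)$ pointwise in $X$, and two first-order families $A_{ij}^{(2)}$ and $\tilde A_{ij}^{(2)}$, which I will obtain from the invariance of $c(\Theta)$ under the left $\ogp$- and right $\ogr$-actions on $\stiefelrp$. Since every operator differentiates only in $\Theta$, once an operator kills $\etr(\Theta^\top X)$ for every $X$ it also kills $c(\Theta)=\int\etr(\Theta^\top X)\mu(dX)$, and any first-order relation derived from an exact symmetry of $c$ holds directly.

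For the quadratic operators I would use $\Rd_{ij}\cdot\etr(\Theta^\top X)=x_{ij}\etr(\Theta^\top X)$ to compute
\[
A_{ij}^{(1)}\cdot\etr(\Theta^\top X)=\Big(\sum_{k=1}^p x_{ki}x_{kj}-\delta_{ij}\Big)\etr(\Theta^\top X).
\]
The bracket equals $(X^\top X)_{ij}-\delta_{ij}$, which vanishes because $X\in\stiefelrp$ satisfies $X^\top X=I_r$. Here lies the crucial difference from the $\sogp$ case: since $X$ is now rectangular, the relation $XX^\top=I$ fails and there is no determinant constraint, so only the single identity $X^\top X=I_r$ survives. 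This is exactly why only $A_{ij}^{(1)}$ with $1\le i\le j\le r$ appears, with neither a companion quadratic family nor a determinant operator.

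For the first-order operators I would exploit that $\mu$ is invariant under $X\mapsto QXR$ for $Q\in\ogp$, $R\in\ogr$, so that $c(Q\Theta R)=c(\Theta)$ via the change of variables $Y=Q^\top XR^\top$ together with $\tr((Q\Theta R)^\top X)=\tr(\Theta^\top Q^\top X R^\top)$. Differentiating the left action along a one-parameter rotation $Q(\epsilon)$ in the $(i,j)$ row-plane ($1\le i<j\le p$), exactly as in the proof of Lemma~\ref{lemma:sogp-annihilate}, perturbs $\Theta$ to first order by $-\epsilon\theta_{jk}$ in the $(i,k)$ entry and $+\epsilon\theta_{ik}$ in the $(j,k)$ entry for each column $k=1,\dots,r$; reading off the linear term of $c(Q(\epsilon)\Theta)-c(\Theta)=0$ gives $A_{ij}^{(2)}\cdot c(\Theta)=0$, the sum running over $k\le r$ because $\Theta$ has $r$ columns. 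Symmetrically, a one-parameter rotation in the $(i,j)$ column-plane ($1\le i<j\le r$) acting on the right yields the perturbation with the row index free up to $p$, producing $\tilde A_{ij}^{(2)}\cdot c(\Theta)=0$.

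The computations are routine once set up; the only point demanding care is the bookkeeping of index ranges — tracking that left rotations live in $\ogp$ and contract over the $r$ columns while right rotations live in $\ogr$ and contract over the $p$ rows — together with verifying that the rectangular shape removes precisely the $XX^\top=I$ and determinant operators present in the square case. I expect this index-and-shape bookkeeping, rather than any analytic difficulty, to be the main thing to get right.
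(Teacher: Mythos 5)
Your proposal is correct and follows essentially the same route as the paper's own (very terse) proof: the quadratic operators $A^{(1)}_{ij}$ kill the integrand pointwise via $X^\top X=I_r$, and the first-order operators follow from the invariances $c(Q\Theta)=c(\Theta)$ for $Q\in\ogp$ and $c(\Theta R)=c(\Theta)$ for $R\in\ogr$ by differentiating one-parameter rotations, exactly as in Lemma~\ref{lemma:sogp-annihilate}. Your remarks on the index ranges and on why the $XX^\top=I$ and determinant operators disappear in the rectangular case are accurate elaborations of what the paper leaves implicit.
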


\begin{proof}
 The proof is similar to that of Lemma~\ref{lemma:sogp-annihilate}.
 The operator $A_{ij}^{(1)}$ annihilates $\etr(\Theta^{\top}X)$
 if $X\in \stiefelrp$.
 Since $c(\Theta)=c(Q\Theta)=c(\Theta R)$ for any $Q\in\ogp$ and $R\in\ogr$,
 we have $A_{ij}^{(2)}\cdot c(\Theta)=0$ and $\tilde{A}^{(2)}\cdot c(\Theta)=0$,
 respectively.
\end{proof}

Let $I$ denote the ideal generated by the above operators and 
let $I_{\rm diag}$ denote its restriction to diagonal matrices
$\Theta=\mathrm{diag}(\theta_{11},\ldots,\theta_{rr})\in\mbR^{p\times r}$.
We denote by $R_{r,p}$ the ring of differential operators with rational
function coefficients in $\theta_{ij}$, $1 \leq i \leq p$, $1 \leq j \leq r$.

\begin{proposition}
If $r=2$, $p=3$, then the ideal $R_{2,3}I$ is zero-dimensional.
The holonomic rank is equal to $4$.
$R_{2,3}/(R_{2,3}I)$ is spanned by $1, \partial_{11}, \partial_{12}, \partial_{11}^2$
over the field of rational functions.
\end{proposition}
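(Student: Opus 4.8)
The plan is to follow exactly the same computational strategy that established the $SO(3)$ case in Proposition~\ref{prop:2}, now for the Stiefel manifold $\stiefel = V_2(\mathbb{R}^3)$, where $\Theta$ is a $3\times 2$ matrix with six variables $\theta_{ij}$ ($1\le i\le 3$, $1\le j\le 2$). First I would write down explicitly the generators of the ideal $I\subset D$ produced by Lemma~\ref{lem:st-diff-operators} in this case: the three operators $A_{ij}^{(1)}=\sum_{k=1}^3\Rd_{ki}\Rd_{kj}-\delta_{ij}$ for $(i,j)\in\{(1,1),(1,2),(2,2)\}$ coming from the column-orthonormality $X^\top X=I_2$; the three operators $A_{ij}^{(2)}$ ($1\le i<j\le 3$) coming from left-invariance $c(Q\Theta)=c(\Theta)$ for $Q\in\og$; and the single operator $\tilde A_{12}^{(2)}$ coming from right-invariance $c(\Theta R)=c(\Theta)$ for $R\in\ogr$ with $r=2$. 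Note that here, unlike the $\sogp$ case, there is no determinant operator $A^{(2)}=\det(\Rd_{ij})-1$, since $\Theta$ is not square; this is the key structural difference driving the change in holonomic rank.

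The core of the argument is the assertion that $R_{2,3}I$ is zero-dimensional with $R_{2,3}/(R_{2,3}I)$ spanned by the four standard monomials $1,\partial_{11},\partial_{12},\partial_{11}^2$. The route I would take is to compute a Gr\"obner basis of $R_{2,3}I$ in the Weyl algebra with rational-function coefficients, using a monomial order (e.g.\ a graded order refined to prefer the indicated standard monomials), and then read off the quotient dimension from the leading terms: zero-dimensionality is equivalent to every variable $\Rd_{ij}$ appearing to some power as a leading monomial, and the holonomic rank equals the number of standard monomials not divisible by any leading term. Since the paper itself proves the analogous propositions ``by a large scale computation on Risa/Asir with Gr\"obner bases,'' I would invoke the same machinery; the proposition is established by exhibiting that the Gr\"obner basis forces the four-dimensional quotient with the claimed basis.

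To see plausibility of the rank before committing to the computation, I would first derive rank reductions by hand from the invariance operators. The three $A^{(2)}$-type operators together with $\tilde A_{12}^{(2)}$ express the four independent rotational symmetries of the $3\times 2$ configuration, and they let one rewrite many first-order derivatives in terms of a small privileged set; combined with the quadratic operators $A^{(1)}_{ij}$ reducing second-order derivatives, one expects a four-dimensional cotangent structure, consistent with $1,\partial_{11},\partial_{12},\partial_{11}^2$. The main obstacle is the same as in Proposition~\ref{prop:2}: the Gr\"obner basis computation in six noncommutative variables with parametric rational coefficients is large and delicate, so the genuine content is the correctness and termination of that symbolic computation rather than any conceptual difficulty. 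I would therefore rely on the verified output (with programs and data at OpenXM/Math, \cite{openxmmath}) to certify both zero-dimensionality and the explicit basis, rather than attempting a closed-form derivation.
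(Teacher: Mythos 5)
Your proposal is correct and follows essentially the same route as the paper: the paper likewise establishes this proposition by a Gr\"obner basis computation on Risa/Asir applied to the generators from Lemma~\ref{lem:st-diff-operators}, with the programs and data deposited at OpenXM/Math. Your identification of the seven generators and your reading of zero-dimensionality and rank from the leading terms match the paper's (uncited, purely computational) argument, so the only content you add is the heuristic dimension count, which is consistent but not part of the paper's proof.
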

This proposition is also proved by a computation on Risa/Asir.
Programs to verify the proposition are at the website OpenXM/Math (\cite{openxmmath}).
We conjecture that $I$ is holonomic and consequently
$R_{r,p} I$ is zero-dimensional
for any $r$ and $p$ in the case of $\stiefelrp$.

\subsubsection{Differential equations for the diagonal matrix}

For the hypergeometric function
$c(\Theta)={}_0F_1(p/2,Y)$, $Y=\Theta^\top \Theta/4$, 
the following partial differential
equation is well known (\cite{Muirhead1970}, \cite[Thm.7.5.6]{muirhead82}).
Let $y_1,\ldots,y_r$ denote the eigenvalues of $Y$.
The function ${}_0F_1$ satisfies the following partial differential equations:
\begin{equation}
\label{eq:hyper-PDE}
 y_i\Rd_i^2F
 +\left\{\frac{p}{2}-\frac{r-1}{2}+\frac{1}{2}\sum_{j=1,j\neq i}^r\frac{y_i}{y_i-y_j}\right\}\Rd_iF
 -\frac{1}{2}\sum_{j=1,j\neq i}^{r}\frac{y_j}{y_i-y_j}\Rd_jF\ =\ F, \quad i=1,\dots,r.
\end{equation}
\cite{Muirhead1970} 
obtained these partial differential equations from 
the partial differential equations satisfied by zonal polynomials
(\cite{James1968}, \cite[Sec.4.5]{takemura-zonal}).
In Appendix \ref{app:pde-hypergeometric}
we check that for low dimensional cases 
these equations are also derived from 
the differential operators in Lemma \ref{lem:st-diff-operators}.

When $\Theta={\rm diag}(\theta_{ii})$ is diagonal,
the normalizing constant $c(\Theta)$ satisfies 
Muirhead's differential equation given below in (\ref{eqn:Muirhead}).
The holonomic rank of the system of equations is $8$ when $p=r=3$.
In the case of $SO(3)$, which is of interest in our applications,
the normalizing constant should satisfy extra partial differential
equations,
because we have shown that the holonomic rank of $R_3 I$ 
is less than or equal to $4$
in Proposition \ref{prop:2}.
In fact, we can find extra differential equations  
from $A_{ij}^{(k)}$ and ${\tilde A}_{ij}^{(k)}$.

\begin{theorem}  \label{th:extraeq}
\begin{enumerate}
\item
Put
\begin{equation} \label{eq:extraeq}
\ell_{ij} = (\theta_{ii}^2-\theta_{jj}^2)\partial_{ii}\partial_{jj} 
    - (\theta_{jj}\partial_{ii} - \theta_{ii}\partial_{jj}) - (\theta_{ii}^2-\theta_{jj}^2)\partial_{kk}
\end{equation}
for $i \not= j$. 
The index $k$ is chosen so that $\{i,j,k\} = \{1,2,3\}$.
Then, the normalizing constant $c({\rm diag}(\theta_{ii}))$ 
satisfies the partial differential equation
$\ell_{ij} \cdot c = 0$
for $1 \leq i < j \leq 3$.
\item The holonomic rank of (\ref{eqn:Muirhead})  and 
(\ref{eq:extraeq}), $ij=12,13,23$, is $4$.
\end{enumerate}
\end{theorem}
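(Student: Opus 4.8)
The plan is to handle the two assertions separately: the differential equations in the first assertion by a direct moment computation on $\sog$, and the rank count in the second assertion by a monomial-reduction argument whose delicate point is a lower bound.

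For the first assertion, since $\Theta=\diag(\theta_{11},\theta_{22},\theta_{33})$ is diagonal I would write each diagonal derivative of $c$ as a moment, so that with $\etr(\Theta^\top X)=\exp(\sum_m\theta_{mm}x_{mm})$ we have $\partial_{ii}\partial_{jj}c=\int_{\sog}x_{ii}x_{jj}\,\etr(\Theta^\top X)\,\mu(dX)$ and similarly for lower orders. Applying $\ell_{ij}$ then converts the claim $\ell_{ij}\cdot c=0$ into the single moment identity
\[
\int_{\sog}\bigl[(\theta_{ii}^2-\theta_{jj}^2)x_{ii}x_{jj}-(\theta_{jj}x_{ii}-\theta_{ii}x_{jj})-(\theta_{ii}^2-\theta_{jj}^2)x_{kk}\bigr]\etr(\Theta^\top X)\,\mu(dX)=0 .
\]
The two ingredients are exactly the geometric facts already used in Lemma~\ref{lemma:sogp-annihilate}. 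First, $\det X=1$ gives $X^\top=X^{-1}=\mathrm{adj}(X)$, hence the cofactor relation $x_{kk}=x_{ii}x_{jj}-x_{ij}x_{ji}$ for $\{i,j,k\}=\{1,2,3\}$; this lets me replace the $\partial_{kk}$ term, turning $(\theta_{ii}^2-\theta_{jj}^2)(x_{ii}x_{jj}-x_{kk})$ into $(\theta_{ii}^2-\theta_{jj}^2)x_{ij}x_{ji}$. Second, the bi-invariance of $\mu$ gives $\int_{\sog}\tfrac{d}{d\epsilon}F(XQ(\epsilon))|_{0}\,\mu(dX)=0$ and its left analogue for the rotations $Q(\epsilon)$ in the $(i,j)$-plane.

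The core of the first assertion is then to apply these invariances with $F=x_{ij}\etr(\Theta^\top X)$ and $F=x_{ji}\etr(\Theta^\top X)$. This produces four relations among the off-diagonal second moments $\int x_{ij}^2\etr$, $\int x_{ji}^2\etr$, $\int x_{ij}x_{ji}\etr$ and the first moments $\partial_{ii}c,\partial_{jj}c$. Eliminating $\int x_{ij}^2\,\etr(\Theta^\top X)\mu(dX)$ between the two relations that contain it yields
\[
(\theta_{ii}^2-\theta_{jj}^2)\int_{\sog}x_{ij}x_{ji}\,\etr(\Theta^\top X)\,\mu(dX)=\theta_{jj}\partial_{ii}c-\theta_{ii}\partial_{jj}c ,
\]
which, after the cofactor substitution above, is exactly $\ell_{ij}\cdot c=0$. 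I would remark that this is the integral counterpart of combining the determinant operator $A^{(2)}$ with the infinitesimal bi-invariance operators $A^{(3)}_{ij},\tilde A^{(3)}_{ij}$, matching the discussion preceding the theorem; the same argument runs verbatim for each pair $ij\in\{12,13,23\}$.

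For the second assertion I would bound the holonomic rank from above by a spanning set of standard monomials and then argue equality. In the variables $y_i=\theta_{ii}^2/4$ each Muirhead equation (\ref{eqn:Muirhead}) has leading symbol $\partial_{ii}^2$ and expresses $\partial_{ii}^2c$ through first-order derivatives, so modulo (\ref{eqn:Muirhead}) alone the quotient is spanned by the $2^3=8$ square-free monomials in $\partial_{11},\partial_{22},\partial_{33}$, accounting for rank $8$. The leading term of $\ell_{ij}$ is $(\theta_{ii}^2-\theta_{jj}^2)\partial_{ii}\partial_{jj}$, so each $\ell_{ij}$ rewrites the cross term $\partial_{ii}\partial_{jj}$ in first order; the three operators $\ell_{12},\ell_{13},\ell_{23}$ thus eliminate $\partial_{11}\partial_{22},\partial_{11}\partial_{33},\partial_{22}\partial_{33}$ and, after one further reduction, $\partial_{11}\partial_{22}\partial_{33}$, leaving only $1,\partial_{11},\partial_{22},\partial_{33}$ and giving rank $\le 4$. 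For the reverse inequality I would show these four classes are independent modulo the combined ideal: any relation $\sum_{|\alpha|\le1}a_\alpha\partial^\alpha\equiv0$ over $\mathbb{C}(\theta_{11},\theta_{22},\theta_{33})$ would annihilate $c$ and force a rational dependence among $c,\partial_{11}c,\partial_{22}c,\partial_{33}c$, which I would exclude by comparing low-order Taylor coefficients through the expansion (\ref{eq:ctheta-expansion}). The main obstacle is exactly this lower bound: the spanning step is purely symbolic, but verifying that the reductions are mutually consistent—equivalently that no $S$-polynomial between Muirhead's operators and the $\ell_{ij}$ collapses the quotient below dimension $4$—is the delicate point, and is what the Gr\"obner-basis computation behind Proposition~\ref{prop:2} ultimately certifies.
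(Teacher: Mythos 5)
Your proof of part~1 is correct but follows a genuinely different route from the paper. The paper proves $\ell_{ij}\cdot c=0$ by exhibiting an explicit certificate of ideal membership, $\ell_{12}\in I+\sum_{i\neq j}\theta_{ij}D_9$, with operator coefficients $a_{13},a_{32},a_{33},b,c_{21},\tilde c_{21},e_{ij}$ found by undetermined coefficients or a syzygy computation, and then restricts to the diagonal. You instead integrate: the cofactor identity $x_{kk}=x_{ii}x_{jj}-x_{ij}x_{ji}$ (from $X^\top=\mathrm{adj}(X)$ on $\sog$) converts the $\partial_{ii}\partial_{jj}-\partial_{kk}$ part into the moment $\int x_{ij}x_{ji}\,\etr(\Theta^\top X)\mu(dX)$, and applying left and right infinitesimal invariance to $F=x_{ij}\etr(\Theta^\top X)$ and $F=x_{ji}\etr(\Theta^\top X)$ gives four linear relations among $\int x_{ij}^2\etr$, $\int x_{ji}^2\etr$, $\int x_{ij}x_{ji}\etr$, $\partial_{ii}c$, $\partial_{jj}c$; eliminating $\int x_{ij}^2\etr$ yields $(\theta_{ii}^2-\theta_{jj}^2)\int x_{ij}x_{ji}\etr=\theta_{jj}\partial_{ii}c-\theta_{ii}\partial_{jj}c$, which is exactly $\ell_{ij}\cdot c=0$. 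I checked this elimination and it is right. Your argument is the integrated, hand-verifiable counterpart of the paper's operator certificate; what it loses is the explicit membership data in $D_9$ that the paper reuses in its Gr\"obner-basis machinery, and what it gains is a short proof requiring no computer algebra. For part~2, both you and the paper ultimately rely on machine computation: your leading-symbol reduction cleanly gives the upper bound of $4$ (the three $\ell_{ij}$ kill the cross terms and, combined with Muirhead's operators, the triple product), and you correctly identify that the lower bound --- no collapse from $S$-polynomial reductions, equivalently the rational independence of $c,\partial_{11}c,\partial_{22}c,\partial_{33}c$ --- is the delicate point that the paper settles only by the {\tt holonomicRank} computation; your Taylor-coefficient sketch for this is plausible but is not carried out, which matches the level of detail the paper itself provides.
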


In order to find the operator $\ell_{ij}$,
we utilized the series expansion (\ref{eq:ctheta-expansion}) and (\ref{eq:eklm})
of the normalizing constant $c$,
which will be given in the next subsection,
and the method of undetermined coefficients or a syzygy computation.
These methods will be explained after the proof.
Once these operators $\ell_{ij}$ and some auxiliary operators are found,
the proof consists of a tedious calculation.

\begin{proof}
1.
Let $I$ be the left ideal in $D_9$, which is the ring of differential
operators in $9$ variables, generated by
$A^{(k)}_{ij}$, $\tilde{A}^{(k)}_{ij}$, $A^{(2)}$.
Since the normalizing constant $c$ is annihilated by the elements of $I$,
we may show that
$\ell_{ij} \in I + \sum_{i \not= j} \theta_{ij} D_9$.
In fact, we have
\begin{eqnarray*}
\ell_{12} &=& 
a_{13} A^{(1)}_{13} + a_{32} A^{(1)}_{32} 
+ a_{33} A^{(1)}_{33} + b A^{(2)} + c_{21} A^{(3)}_{21} 
+ \tilde{c}_{21} \tilde{A}^{(3)}_{21}
+ \sum_{i\ne j} \theta_{ij} e_{ij}
\end{eqnarray*} 
where
\begin{eqnarray*}
a_{13} &=& (\theta_{22}^2-\theta_{11}^2)(\partial_{21}\partial_{32}-\partial_{22}\partial_{31})  \\
a_{32} &=&-(\theta_{22}^2-\theta_{11}^2)(\partial_{11}\partial_{32}-\partial_{12}\partial_{31})  \\
a_{33} &=& (\theta_{22}^2-\theta_{11}^2)(\partial_{11}\partial_{22}-\partial_{12}\partial_{21})  \\
b      &=&-(\theta_{22}^2-\theta_{11}^2)\partial_{33}  \\
{c}_{21} &=& \theta_{22}(\theta_{22}\partial_{21}-\theta_{11}\partial_{12})\partial_{22}  \\
\tilde{c}_{21}&=& (\theta_{11}\partial_{21}-\theta_{22}\partial_{12})(\theta_{22}\partial_{22}-1) - 2\theta_{22}\partial_{12}
\end{eqnarray*}
and
\begin{eqnarray*}
e_{12} &=&
\theta_{22}((\theta_{22} \partial_{22}+1) \partial_{11}-\theta_{11} \partial_{22}^2) \partial_{12}
-\theta_{11} (\theta_{22} \partial_{22}-1)\partial_{11}\partial_{21} 
+\theta_{22}^2 \partial_{22}^2 \partial_{21}
\\
e_{13} &=&
\theta_{22} (\theta_{22}\partial_{21}-\theta_{11}\partial_{12}) \partial_{22}\partial_{23}
\\
e_{21} &=&
(\theta_{22} \partial_{22} \theta_{11} \partial_{11}-\theta_{22}^2 \partial_{22}^2-\theta_{22} \partial_{22}) \partial_{12}
-\theta_{22}^2 \partial_{22} \partial_{21} \partial_{11}
+\theta_{11}(\theta_{22} \partial_{22}-1) \partial_{21}\partial_{22}
\\
e_{23} &=&
-\theta_{22} (\theta_{22} \partial_{21}-\theta_{11} \partial_{12}) \partial_{13} \partial_{22}
\\
e_{31} &=&
(\theta_{11}(\theta_{22} \partial_{22}- 1) \partial_{21} -\theta_{22}(\theta_{22} \partial_{22}+ 1) \partial_{12}) \partial_{32}
\\
e_{32} &=&
(\theta_{22} (\theta_{22} \partial_{22}+1 )\partial_{12} - \theta_{11}(\theta_{22} \partial_{22}-1) \partial_{21}) \partial_{31}.
\end{eqnarray*}
We can show that 
the differential operator (\ref{eqn:Muirhead})
found by the Muirhead also belongs to $I + \sum_{i \not= j} \theta_{ij} D_9$
with an analogous method.
Now, the second statement can be shown by a rank evaluation program
(use, e.g., the {\tt holonomicRank} command of Macaulay2).
\end{proof}

\commentA{
The differential operator (\ref{eqn:muirhead})
found by the Muirhead can be found 
by an analogous method.
In fact, we have the following expression.
Put
\begin{eqnarray*}
\ell_{11} &=& (\theta_{11}^2-\theta_{22}^2)(\theta_{11}^2-\theta_{33}^2)
\partial_{11}^2 + (\theta_{11}^2-\theta_{33}^2)
(\theta_{11}\partial_{11}-\theta_{22}\partial_{22}) \\
&& 
+ 
(\theta_{11}^2-\theta_{22}^2)
(\theta_{11}\partial_{11}-\theta_{33}\partial_{33}) - (\theta_{11}^2-\theta_{22}^2)(\theta_{11}^2-\theta_{33}^2)
\end{eqnarray*}
Then, we have
\[
\ell_{11} 
=
\tilde{b}^{(1)}_{11} \tilde{A}^{(1)}_{11}
+ 
b^{(3)}_{13} A^{(3)}_{13}
+ 
b^{(3)}_{23} A^{(3)}_{23}
+
\tilde{b}^{(3)}_{31} \tilde{A}^{(3)}_{31}
+
\tilde{b}^{(3)}_{21} \tilde{A}^{(3)}_{21}
+ 
\sum_{i\ne j} \theta_{ij} e'_{ij}
\]
Here, 
\begin{eqnarray*}
\tilde{b}^{(1)}_{11} 
&=& 
(\theta_{11}^2-\theta_{33}^2)(\theta_{11}^2-\theta_{22}^2)
\\
b^{(3)}_{13}
&=&
(\theta_{11}^2-\theta_{33}^2)(\theta_{33} \partial_{23} - \theta_{22} \partial_{32}) \partial_{21}
+(\theta_{22}^2+\theta_{33}^2-2\theta_{11}^2) \partial_{31}
\\
b^{(3)}_{23}
&=&
(\theta_{11}^2-\theta_{33}^2)^2 \partial_{21}\partial_{31}
+(\theta_{11}^2-\theta_{33}^2) \theta_{33} \partial_{23}
\\
\tilde{b}^{(3)}_{31}
&=&
-\theta_{33}
(
(\theta_{11}^2-\theta_{33}^2)(\theta_{33} \partial_{23}-\theta_{22} \partial_{32}) \partial_{21}
+ (\theta_{22}^2+\theta_{33}^2 - 2 \theta_{11}^2) \partial_{31}
)
\\
\tilde{b}^{(3)}_{21}
&=&
\theta_{22}(\theta_{11}^2-\theta_{33}^2) \partial_{21}
\end{eqnarray*}
and
\begin{eqnarray*}
e'_{12} &=& -
(
\theta_{22}(\theta_{11}^2 -\theta_{33}^2)\partial_{11}\partial_{21}
+
\theta_{11}(\theta_{11}^2-\theta_{33}^2)
(\theta_{33}  \partial_{23} - \theta_{22} \partial_{32}) \partial_{21}\partial_{32}
+ \theta_{11}(\theta_{22}^2+\theta_{33}^2-2\theta_{11}^2)\partial_{32} \partial_{31}
)
\\
e'_{21} &=& 
(\theta_{11}^2-\theta_{33}^2) \partial_{21} ((\theta_{11}^2-\theta_{33}^2)   \partial_{31}^2 +
\theta_{33}(\theta_{33} \partial_{23}- \theta_{22} \partial_{32}) \partial_{23} - \theta_{22} \partial_{22} ) 
-  \theta_{33}(\theta_{11}^2-\theta_{22}^2) \partial_{31} \partial_{23} 
\\
e'_{13} &=&
(
(\theta_{11}^2-\theta_{33}^2) (\theta_{33} \partial_{23} - \theta_{22} \partial_{32}) \partial_{21}
+(\theta_{22}^2+\theta_{33}^2- 2\theta_{11}^2) \partial_{31})
(\theta_{33} \partial_{11}-\theta_{11}\partial_{33} ) 
\\
e'_{23} &=& 
- (\theta_{11}^4-\theta_{33}^4) \partial_{33} \partial_{21} \partial_{31} 
- \theta_{33} (\theta_{11}^2-\theta_{22}^2) \partial_{21}\partial_{31} 
\\
&& 
+ \theta_{33} (\theta_{11}^2-\theta_{33}^2)(\theta_{33} \partial_{21} (\partial_{33} \partial_{31}+\partial_{23} \partial_{21}) + \theta_{33} \partial_{21} \partial_{33} \partial_{31} -(\partial_{21} \partial_{31}+\partial_{33} \partial_{23}) - \theta_{22} \partial_{32} \partial_{21}^2)
\\
e'_{31} &=&
(\theta_{11}^2 - \theta_{33}^2)^2 \partial_{31} \partial_{21}^2
+
(\theta_{11}^2-\theta_{33}^2)(\theta_{33} \partial_{23}-\theta_{22}\partial_{32}) 
(\theta_{11}\partial_{11}-\theta_{33}\partial_{33})\partial_{21}
\\
&&
+ (\theta_{11}^2 -\theta_{33}^2)(\theta_{33}\partial_{23}+\theta_{22}\partial_{32})\partial_{21}
+ (\theta_{22}^2+\theta_{33}^2-2\theta_{11}^2)(\theta_{11}\partial_{11}- \theta_{33}\partial_{33}) \partial_{31}
\\
e'_{32} &=& 
\theta_{11}
(\theta_{11}^2-\theta_{33}^2)
(\theta_{33} \partial_{23}-\theta_{22}\partial_{32}) 
\partial_{21}\partial_{12}
+\theta_{11} (\theta_{22}^2+\theta_{33}^2- 2\theta_{11}^2) \partial_{31}\partial_{12}
\\
&&
+ (\theta_{11}^2 -  \theta_{33}^2)^2 \partial_{31} \partial_{22}\partial_{21}
+ (\theta_{11}^2 -\theta_{33}^2)(\theta_{33} \partial_{23}\partial_{22} -\theta_{22} \partial_{31}\partial_{21})
\end{eqnarray*}
} 

Let us explain how we found the operator $\ell_{12}$.
We put 
$\ell_{12} = \sum d_{ij}(\theta') \partial_{ii} \partial_{jj}
            +\sum d_{i}(\theta') \partial_{ii}
            +d(\theta')
$, $\theta'=(\theta_{11},\theta_{22},\theta_{33})$
where the degree of polynomials $d_{ij}, d_i, d$ is less than or equal to $2$.
We act the operator $\ell_{12}$ to a truncated series expansion and
obtain a system of linear equations for the undetermined coefficients of the polynomials.
By solving the system, we get a candidate of $\ell_{12}$.
The operators $a_{ij}$, ${\tilde a}_{ij}$, $b$, ${\tilde c}_{ij}$ 
and $c_{ij}$ are also found by the method 
of undetermined coefficients.
More precisely speaking,
we put $a_{ij} = \sum_{|\alpha|+|\beta|\leq N} c_{ij}^{\alpha \beta} \theta^\alpha \partial^\beta$
where $c_{ij}^{\alpha \beta}$ are undetermined coefficients
and $N=5$.
We put other operators analogously.
Expand 
\begin{equation} \label{eq:xright}
\ell_{12}+\sum a_{ij} A^{(1)}_{ij}
 + \sum {\tilde a}_{ij} {\tilde A}^{(1)}_{ij}
 + b A^{(2)} 
 + \sum c_{ij} A^{(3)}_{ij} 
 + \sum {\tilde c}_{ij} {\tilde A}^{(3)}_{ij} 
\end{equation}
into the normally ordered expression 
and put $\theta_{ij}=0$ for all $i\not= j$.
And then, set the coefficients of each $\theta^{\alpha} \partial^\beta$ to $0$
and we obtain a system of linear equations with respect to the undetermined
coefficients.
Find a non-trivial solution of the system which gives a candidate of undetermined operators.
The operators $e_{ij}$ are obtained by collecting the right coefficients
of (\ref{eq:xright}) with respect to $\theta_{ij}$, $i \not= j$.

A different approach is
a syzygy computation (see, e.g., \cite{CLO2}).
We put $a_{ij} = \sum_{|\beta|\leq N} c_{ij}^\beta \partial^\beta$
were $c_{ij}^\beta$ are undetermined polynomials.
We put other operators analogously.
Doing the same procedure as above,
we obtain a system of linear indefinite equations in the polynomial ring
${\bf Q}[\theta_{11}, \theta_{22}, \theta_{33}]$.
It can be solved by the syzygy computation.
The performance of the second method is more efficient than the first
one.

\begin{theorem}  \label{th:pfaffianSO3}
The Pfaffian equation derived from (\ref{eqn:Muirhead})
 and (\ref{eq:extraeq}), $ij=12,13,23$ is 
\begin{equation}  \label{eq:pfaffianSO3}
\partial_{ii}C = P_i C, 
\end{equation}
where
$C=(c,\partial_{11}c,\partial_{22}c,\partial_{33}c)^T$
and
$$P_1 = 
\begin{pmatrix}
0&  1& 0& 0 \\
 1&  \frac{    -    \theta_{11}  (    2   \theta_{11}^{ 2} -  \theta_{22}^{ 2} -  \theta_{33}^{ 2} )} {    (  \theta_{11}- \theta_{33})  (  \theta_{11}+ \theta_{33})  (  \theta_{11}- \theta_{22})  (  \theta_{11}+ \theta_{22})}&  \frac{  \theta_{22}} {  (  \theta_{11}- \theta_{22})  (  \theta_{11}+ \theta_{22})}&  \frac{  \theta_{33}} {  (  \theta_{11}- \theta_{33})  (  \theta_{11}+ \theta_{33})} \\
0&  \frac{  \theta_{22}} {  (  \theta_{11}- \theta_{22})  (  \theta_{11}+ \theta_{22})}&  \frac{   -  \theta_{11}} {  (  \theta_{11}- \theta_{22})  (  \theta_{11}+ \theta_{22})}&  1 \\
0&  \frac{  \theta_{33}} {  (  \theta_{11}- \theta_{33})  (  \theta_{11}+ \theta_{33})}&  1&  \frac{   -  \theta_{11}} {  (  \theta_{11}- \theta_{33})  (  \theta_{11}+ \theta_{33})} \\
\end{pmatrix}
$$
$$P_2 = 
\begin{pmatrix}
0& 0&  1& 0 \\
0&  \frac{  \theta_{22}} {  (  \theta_{11}- \theta_{22})  (  \theta_{11}+ \theta_{22})}&  \frac{   -  \theta_{11}} {  (  \theta_{11}- \theta_{22})  (  \theta_{11}+ \theta_{22})}&  1 \\
 1&  \frac{   -  \theta_{11}} {  (  \theta_{11}- \theta_{22})  (  \theta_{11}+ \theta_{22})}&  \frac{    -  \theta_{22}  (    \theta_{11}^{ 2} -  2   \theta_{22}^{ 2} +  \theta_{33}^{ 2} )} {    (  \theta_{22}- \theta_{33})  (  \theta_{22}+ \theta_{33})  (  \theta_{11}- \theta_{22})  (  \theta_{11}+ \theta_{22})}&  \frac{  \theta_{33}} {  (  \theta_{22}- \theta_{33})  (  \theta_{22}+ \theta_{33})} \\
0&  1&  \frac{  \theta_{33}} {  (  \theta_{22}- \theta_{33})  (  \theta_{22}+ \theta_{33})}&  \frac{   -  \theta_{22}} {  (  \theta_{22}- \theta_{33})  (  \theta_{22}+ \theta_{33})} \\
\end{pmatrix}
$$
$$P_3 = 
\begin{pmatrix}
0& 0& 0&  1 \\
0&  \frac{  \theta_{33}} {  (  \theta_{11}- \theta_{33})  (  \theta_{11}+ \theta_{33})}&  1&  \frac{   -  \theta_{11}} {  (  \theta_{11}- \theta_{33})  (  \theta_{11}+ \theta_{33})} \\
0&  1&  \frac{  \theta_{33}} {  (  \theta_{22}- \theta_{33})  (  \theta_{22}+ \theta_{33})}&  \frac{   -   \theta_{22}} {  (  \theta_{22}- \theta_{33})  (  \theta_{22}+ \theta_{33})} \\
 1&  \frac{   -  \theta_{11}} {  (  \theta_{11}- \theta_{33})  (
 \theta_{11}+ \theta_{33})}&  \frac{  -  \theta_{22}} {  (  \theta_{22}-
 \theta_{33})  (  \theta_{22}+ \theta_{33})}&  \frac{   \theta_{33}  (
 \theta_{11}^{ 2} +  \theta_{22}^{ 2} -  2   \theta_{33}^{ 2} )} {    (
 \theta_{22}- \theta_{33})  (  \theta_{22}+ \theta_{33})  (  \theta_{11}-
 \theta_{33})  (  \theta_{11}+ \theta_{33})} \\
\end{pmatrix}
$$
\end{theorem}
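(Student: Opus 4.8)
The plan is to read off the matrices $P_i$ as the matrices representing the operators ``multiplication by $\partial_{ii}$'' on the rank-$4$ quotient module established in Theorem~\ref{th:extraeq}. Let $J$ be the left ideal generated by Muirhead's equations (\ref{eqn:Muirhead}) and the three operators $\ell_{12},\ell_{13},\ell_{23}$ of (\ref{eq:extraeq}) in the ring of differential operators in $\theta_{11},\theta_{22},\theta_{33}$ with rational-function coefficients. By Theorem~\ref{th:extraeq} this ideal has holonomic rank $4$; I take the four monomials $\{1,\partial_{11},\partial_{22},\partial_{33}\}$ — the components of the vector $C$ — as the basis of the quotient, a choice I justify below by checking that every degree-two monomial reduces into their span. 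To obtain the Pfaffian form $\partial_{ii}C=P_iC$ I would, for each $i\in\{1,2,3\}$ and each basis monomial $m\in\{1,\partial_{11},\partial_{22},\partial_{33}\}$, compute the normal form of $\partial_{ii}\,m$ modulo $J$; the resulting coefficient vector over the rational-function field is the corresponding row of $P_i$.

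The reductions split into three types. First, $\partial_{ii}\cdot 1=\partial_{ii}$ is itself a basis element, producing the trivial ``shift'' first row of each $P_i$ (namely $(0,1,0,0)$, $(0,0,1,0)$, $(0,0,0,1)$). Second, for the mixed products $\partial_{ii}\partial_{jj}$ with $i\neq j$, I would solve the operator $\ell_{ij}$ of (\ref{eq:extraeq}) for its leading term, giving
\begin{equation*}
\partial_{ii}\partial_{jj}\,c=\frac{\theta_{jj}}{\theta_{ii}^2-\theta_{jj}^2}\,\partial_{ii}c-\frac{\theta_{ii}}{\theta_{ii}^2-\theta_{jj}^2}\,\partial_{jj}c+\partial_{kk}c,\qquad\{i,j,k\}=\{1,2,3\},
\end{equation*}
which yields all off-diagonal entries. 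Third, for the pure second derivatives $\partial_{ii}^2$, I would use Muirhead's equation, equivalently (\ref{eq:hyper-PDE}) with $p=r=3$ after the quadratic change of variables $y_i=\theta_{ii}^2/4$ (so that $\partial_{y_i}=(2/\theta_{ii})\partial_{ii}$ and $\partial_{y_i}^2=(4/\theta_{ii}^2)\partial_{ii}^2-(4/\theta_{ii}^3)\partial_{ii}$). Substituting and simplifying produces the diagonal entries, for instance
\begin{equation*}
\partial_{11}^2 c=c-\frac{\theta_{11}(2\theta_{11}^2-\theta_{22}^2-\theta_{33}^2)}{(\theta_{11}^2-\theta_{22}^2)(\theta_{11}^2-\theta_{33}^2)}\,\partial_{11}c+\frac{\theta_{22}}{\theta_{11}^2-\theta_{22}^2}\,\partial_{22}c+\frac{\theta_{33}}{\theta_{11}^2-\theta_{33}^2}\,\partial_{33}c,
\end{equation*}
which matches the second row of $P_1$, and whose cyclic analogues match the corresponding rows of $P_2,P_3$.

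Finally I would assemble the six degree-two reductions — three diagonal from Muirhead and three mixed from the $\ell_{ij}$ — into $P_1,P_2,P_3$ as displayed. Note that these are exactly the six degree-two monomials ever produced by differentiating the four basis monomials (the triple product $\partial_{11}\partial_{22}\partial_{33}$ never arises here), so they suffice to close the system. Since every such monomial reduces into the span of the four basis monomials while the rank is exactly $4$, the four monomials indeed form a basis and the $P_i$ are well defined; the entries shared between two matrices (e.g.\ the $\partial_{11}\partial_{22}$ reduction in both $P_1$ and $P_2$) coincide automatically because they come from the single operator $\ell_{12}$, and Frobenius integrability is guaranteed because $c$ is a genuine common solution.

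The main obstacle is purely computational and lies in the pure second-derivative reductions: the substitution $y_i=\theta_{ii}^2/4$ injects the extra first-order term $-(4/\theta_{ii}^3)\partial_{ii}$ into Muirhead's operator, and one must track the pole cancellations carefully so that the spurious $1/\theta_{ii}$ contributions vanish and one lands on precisely the factored denominators $(\theta_{ii}-\theta_{jj})(\theta_{ii}+\theta_{jj})$ of the statement. This bookkeeping is routine but error-prone, so in practice I would cross-check each reduction against a Gr\"obner-basis normal-form computation on Risa/Asir or the \texttt{holonomicRank}/normal-form routines of Macaulay2.
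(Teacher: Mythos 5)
Your proposal is correct and follows essentially the same route as the paper, which simply states that the theorem follows ``by a straightforward calculation from Theorem~\ref{th:extraeq}'' (i.e.\ normal-form reduction of $\partial_{ii}m$ for each standard monomial $m\in\{1,\partial_{11},\partial_{22},\partial_{33}\}$ modulo the ideal generated by (\ref{eqn:Muirhead}) and the $\ell_{ij}$). Your explicit reductions of the mixed products via $\ell_{ij}$ and of the pure second derivatives via Muirhead's equation (with $p=r=3$, so the $(p-r)\theta_{ii}^{-1}\partial_{ii}$ term vanishes) reproduce all entries of $P_1,P_2,P_3$ as displayed.
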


This theorem can be shown by a straightforward calculation
from Theorem \ref{th:extraeq}
as explained in, e.g., \cite{hgd}.

We evaluate the normalizing constant and its derivatives
by evaluating a truncated series expansion 
near the origin and extend the value by solving an ordinary
differential equation (the holonomic gradient method, \cite{HNTT}).
The ODE is given in the following Corollary.
\begin{corollary}
For constants $a,b,c$, we restrict the function $C$ to
$\theta_{11} = at$, $\theta_{22}=bt$, $\theta_{33}=ct$.
The ordinary differential equation satisfied by $C$ with respect to $t$
is
$$ \frac{dC}{dt} = 
\left( A - \frac{2}{t}\, {\rm diag}(0,1,1,1) \right) C, 
$$
where
$$ A = 
\begin{pmatrix}
 0 & a & b & c \\
 a & 0 & c & b \\
 b & c & 0 & a \\
 c & b & a & 0 \\
\end{pmatrix}
$$
and the eigenvalues of $A$ are
$\{ a-b-c,-a+b-c,-a-b+c,a+b+c \}$.
\end{corollary}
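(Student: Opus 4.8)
The plan is to obtain the ordinary differential equation by restricting the Pfaffian system of Theorem~\ref{th:pfaffianSO3} to the line $\theta_{11}=at$, $\theta_{22}=bt$, $\theta_{33}=ct$, and then to read off the eigenvalues of the constant part $A$ by exploiting its symmetry. First I would apply the chain rule. Along the line $C$ becomes a function of $t$ alone, and since $d\theta_{11}/dt=a$, $d\theta_{22}/dt=b$, $d\theta_{33}/dt=c$, the identities $\partial_{ii}C=P_iC$ of Theorem~\ref{th:pfaffianSO3} give
\[
\frac{dC}{dt} = a\,\partial_{11}C + b\,\partial_{22}C + c\,\partial_{33}C = (aP_1+bP_2+cP_3)\,C .
\]
It then remains to substitute $\theta_{11}=at$, $\theta_{22}=bt$, $\theta_{33}=ct$ into $aP_1+bP_2+cP_3$ and to verify that the result equals $A-(2/t)\,\mathrm{diag}(0,1,1,1)$.

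For the off-diagonal entries I would substitute and observe cancellation of the polar $1/t$ terms. For example, the $(2,3)$ entry is $a\,\theta_{22}/(\theta_{11}^2-\theta_{22}^2)-b\,\theta_{11}/(\theta_{11}^2-\theta_{22}^2)+c$, which upon substitution becomes $ab/((a^2-b^2)t)-ab/((a^2-b^2)t)+c=c$, reproducing the corresponding entry of $A$; every other off-diagonal entry collapses to the appropriate constant $a$, $b$, or $c$ in the same way, with the singular parts cancelling in pairs. The first row and column carry no denominators and reproduce the entries of $A$ at once.

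The only non-trivial simplification is on the diagonal. Taking the $(2,2)$ entry,
\[
a\,\frac{-\theta_{11}(2\theta_{11}^2-\theta_{22}^2-\theta_{33}^2)}{(\theta_{11}^2-\theta_{33}^2)(\theta_{11}^2-\theta_{22}^2)} + b\,\frac{\theta_{22}}{\theta_{11}^2-\theta_{22}^2} + c\,\frac{\theta_{33}}{\theta_{11}^2-\theta_{33}^2},
\]
after substitution this equals $1/t$ times a rational function of $a,b,c$; placing it over the common denominator $(a^2-b^2)(a^2-c^2)$, the crux is the polynomial identity
\[
-a^2(2a^2-b^2-c^2)+b^2(a^2-c^2)+c^2(a^2-b^2) = -2(a^2-b^2)(a^2-c^2),
\]
so the entry reduces to $-2/t$, matching $A_{22}-(2/t)$. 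The $(3,3)$ and $(4,4)$ entries follow from the same identity under the relevant permutation of $a,b,c$. I expect this numerator factorization, which is exactly what converts the apparently $1/t$-singular coefficient into the clean $-2/t$, to be the main (though routine) obstacle.

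Finally, for the eigenvalues I would note that $A$ is a group-circulant for $(\mbZ/2\mbZ)^2$: indexing rows and columns $1,2,3,4$ by the group elements $(0,0),(0,1),(1,0),(1,1)$, one has $A_{ij}=f(i\oplus j)$ with $f(0,0)=0$, $f(0,1)=a$, $f(1,0)=b$, $f(1,1)=c$. Hence the four character vectors $(1,1,1,1)$, $(1,-1,1,-1)$, $(1,1,-1,-1)$, $(1,-1,-1,1)$ are eigenvectors, and substituting each into $A$ yields the eigenvalues $a+b+c$, $-a+b-c$, $a-b-c$, $-a-b+c$, which is precisely the claimed set $\{a-b-c,-a+b-c,-a-b+c,a+b+c\}$. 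Alternatively, one may simply compute and factor the characteristic polynomial of the $4\times4$ matrix $A$ directly.
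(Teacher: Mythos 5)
Your proposal is correct and is exactly the straightforward calculation the paper intends: the chain rule gives $dC/dt=(aP_1+bP_2+cP_3)C$, the substitution $\theta_{ii}=at,bt,ct$ collapses the off-diagonal poles in pairs, and the diagonal entries reduce to $-2/t$ via the factorization $-a^2(2a^2-b^2-c^2)+b^2(a^2-c^2)+c^2(a^2-b^2)=-2(a^2-b^2)(a^2-c^2)$ and its permutations. The group-circulant observation for the eigenvalues is a clean (and correct) way to organize what the paper leaves as a direct computation.
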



\subsubsection{Practice of HGD} \label{sec:grid}
Although the HGD is a general method
which can be applied to broad problems, 
we need a good guess (oracle) of a starting point to search for
the optimal point (MLE).

We explain why we need a good guess of a starting point with an example
of $SO(3)$.
Let $\Theta$ be the optimal point for a given data
and $\frac{\partial C}{\partial \theta_{ii}} = P_{i}(\theta) C$
be the Pfaffian system to apply for the HGD.
The denominator of the coefficient matrix $P_{ij}$ is the polynomial
$\prod_{1\leq i < j \leq 3} (\theta_{ii} \pm \theta_{jj})$
by Theorem \ref{th:pfaffianSO3}. 
We denote by $V$ the zero set of the polynomial.
The numerical integration procedure
of the Pfaffian system becomes unstable near the zero
set $V$, which is called the singular locus
of the Pfaffian system.
Therefore, the starting point must be in the same component as the optimal point
in $\mbR^3 \setminus V$. 
In our current implementation of HGD, we have three heuristic methods
to find a starting point:
\begin{enumerate}
\item
We find the starting point
by preparing a table of the values of the normalizing constant 
at grids and making the exhaustive search of the optimal point on the grids.
Note that the table of the normalizing constant does not depend on specific data.
\item 
In the case of $SO(3)$, we have $24$
connected components in $\mbR^3
      \setminus V$.
We choose starting points in the  $24$ connected components
and apply the HGD for these points until a success.
\item Solve the MLE problem by an approximate method. Use the approximate 
value of the MLE
as a starting point.
For the Fisher distribution, there are several methods to find approximate value of the MLE.
They are (1) series expansion method given in this paper
and (2) the method by \cite{wood-1993}.
\end{enumerate}
As a referee pointed out,
the non-parametric estimation method by \cite{Beran1979} can also be used to give a starting point.

%
%

\subsection{Series expansion approach for $\sog$ and $\stiefel$}
\label{subsec:series-expansion}

We describe a method to compute the maximum likelihood estimate
by an infinite series expansion of $c(\Theta)$.
By Lemma~\ref{lemma:mle-diag},
computation of the maximum likelihood estimate for $\sogp$
is reduced to computation of $c(\diag(\phi_1,\ldots,\phi_p))$ and its derivatives
with respect to $\phi_i$'s, together with the usual gradient method.
In this subsection we give an explicit series expansion of $c(\diag(\phi_1,\phi_2,\phi_3))$ when $p=3$.
Note that $c(\Theta)$ for any $\Theta\in\mathbb{R}^{3\times 3}$ is
also obtained via sign-preserving SVD due to the rotational invariance of $c(\Theta)$.
By using the expansion formula we also clarify
the difference between the normalizing constants for the orthogonal group
$\og$ and the special orthogonal group $\sog$.
The series expansion approach for $\stiefel$ is also discussed.

From mathematical viewpoint, the holonomic gradient descent and the infinite
series expansion are related as follows.
In the general recipe of the holonomic gradient descent and holonomic systems,
we can construct series expansion of the normalizing constant 
$c(\Theta)$ 
up to any degree modulo finite constants by an algorithmic method
from a holonomic system of differential equations satisfied by $c(\Theta)$,
which is obtained in the previous subsection.
The existence of finite recurrence relations for coefficients of the series
is proved by the holonomicity.
This is a multi-variable generalization of the fact that coefficients
of series solutions of linear ODE satisfy a finite recurrence relation.
Since this computation requires huge computational resources,
constructing the series expansion in a more efficient way is preferable 
to using the general algorithm.
Here we derive an infinite series expansion for $\sog$ with an analysis of integrals. 

Let $E[\cdot]$ denote the expectation with respect to the uniform
distribution on $\sog$.
Let $\phi_1, \phi_2, \phi_3$ be the sign-preserving singular values of $\Theta$.
By the rotational invariance,
the expansion of $c(\Theta)$ is
\begin{align}
c(\Theta)&=\sum_{h=0}^\infty \frac{1}{h!} E[(\tr \Theta^\top X)^h]
= \sum_{h=0}^\infty \frac{1}{h!} E[(\phi_1 x_{11} + \phi_2 x_{22} + \phi_3 x_{33})^h]
\nonumber \\
&=\sum_{k,l,m=0}^\infty \frac{1}{k! \, l! \, m!} \phi_1^k \phi_2^l \phi_3^m E[x_{11}^k x_{22}^l x_{33}^m]
\label{eq:ctheta-expansion}
\end{align}
and the problem is reduced to the evaluation of
\[
E(k,l,m)=E[x_{11}^k x_{22}^l x_{33}^m].
\]
Again by the rotational invariance
we can simultaneously change the sign
of any two of $x_{11}, x_{22}, x_{33}$.   From this it is easily seen
that $E(k,l,m)=0$ unless $k,l,m$ are all even or $k,l,m$ are all
odd.

Note that for $\og$ we can individually change the signs of 
$x_{11}, x_{22}, x_{33}$.  Hence for $\og$
$E(k,l,m)=0$ unless $k,l,m$ are all even and 
$c(\Theta)$ is indeed a function of the eigenvalues of $Y=\Theta^\top \Theta/4$.
Therefore the difference between  $c(\Theta)$ for $\sog$ and $c(\Theta)$ for $\og$ 
comes from terms 
$E[k,l,m]=0$ with $k,l,m$ all odd.

We now express $X = (x_{ij})\in \sog$ by the Euler angles $\theta,\phi,\psi$.
\[ 
X = 
{\footnotesize
\begin{pmatrix}
\sin\theta\sin\phi  &
\hphantom{-}\cos\phi\sin\psi + \cos\theta\sin\phi\cos\psi 
&
-\cos\phi\cos\psi + \cos\theta\sin\phi\sin\psi \\
\\
\sin\theta\cos\phi &
-\sin\phi\sin\psi + \cos\theta\cos\phi\cos\psi 
&
\hphantom{-}\sin\phi\cos\psi + \cos\theta\cos\phi\sin\psi \\
\\
\cos\theta\hphantom{\cos\phi} 
& -\sin\theta\cos\psi
& -\sin\theta\sin\psi
\end{pmatrix}.
}
\]
The Jacobian of the above transformation is $\sin\theta$ and the
range of variables is
\[ 0 \le \theta \le \pi,\quad 0 \le \phi \le 2 \pi,\quad 0 \le \psi \le 2 \pi. \]
Hence the integral of $f$ over $\sog$ with respect to the uniform distribution is expressed as
\[
\int_{\sog} f(X) d \mu(X)
= \frac{1}{8\pi^2}\int_0^{\pi} d\theta\int_0^{2\pi} d\phi\int_0^{2\pi} d\psi \;
f (X(\theta,\phi,\psi)) \sin\theta.
\]

For 
\[ f = x_{11}^kx_{22}^lx_{33}^m = (\sin\theta\sin\phi)^k(-\sin\phi\sin\psi + \cos\theta\cos\phi\cos\psi)^l(-\sin\theta\sin\psi)^m
 \]
we have
\begin{align*}
f\cdot \sin\theta &= 
(-1)^m\sin^{k+m+1}\theta\sin^k\phi\sin^m\psi \\
& \qquad \qquad \cdot
\sum_{n=0}^l \binom{l}{n} (-1)^n \sin^n\phi\sin^n\psi\cos^{l-n}\theta\cos^{l-n}\phi\cos^{l-n}\psi \\
&= \sum_{n=0}^l \binom{l}{n} (-1)^{m+n}\sin^{k+m+1}\theta\cos^{l-n}\theta \sin^{k+n}\phi \cos^{l-n}\phi \sin^{m+n}\psi \cos^{l-n}\psi  .
\end{align*}

Define
\[ I[m,n] = \frac{(m-1)!! (n-1)!! }{(m+n)!!}, \]
where $(2a)!!=\prod_{j=1}^a(2j)$ and $(2a-1)!!=\prod_{j=1}^a(2j-1)$
for each non-negative integer $a$.
Then from well-known results on the definite integrals of trigonometric functions
we have
\begin{equation}
\label{eq:eklm}
E(k,l,m) = \sum_{0\le n\le l \> \atop 
l-n:  \text{ even} } \binom{l}{n}
I[k+m+1,l-n]\cdot I[k+n,l-n] \cdot I[m+n,l-n] .
\end{equation}
By numerical experiments we found that  (\ref{eq:eklm}) can be computed easily
and we can evaluate $c(\Theta)$ by the right-hand side of
(\ref{eq:ctheta-expansion}) to a desired accuracy.
For large $k,l,m$ the value of $E(k,l,m)$ can be approximated by
Laplace's method.  Laplace approximation to $E(k,l,m)$ is given in  Appendix
\ref{app:eklm}.


We now consider the maximization of (\ref{eq:objective}) with respect to 
$\{\phi_i\}_{i=1}^3$ when we adopt direct use of the gradient descent.
The gradient method uses the first derivatives of (\ref{eq:objective}).
The Hessian matrix is also needed if one uses the Newton method.
Since the first term of (\ref{eq:objective}) is linear,
it is sufficient to give the series expansion of
the derivatives of $c(\diag(\phi_1,\phi_2,\phi_3))$.
They are easily obtained from the expansion of $c(\Theta)$.
For example the derivative with respect to $\phi_1$ is
\begin{align*}
 \frac{\partial c(\diag(\phi_1,\phi_2,\phi_3))}{\partial\phi_1}
 = \sum_{k,l,m=0}^{\infty}\frac{1}{k! \, l! \, m!}\phi_1^k \phi_2^l 
 \phi_3^m E(k+1,l,m).
\end{align*}
Similarly,
\begin{align*}
& \frac{\partial^2c(\diag(\phi_1,\phi_2,\phi_3))}{\partial\phi_1^2}
=\sum_{k,l,m=0}^\infty \frac{1}{k! \, l! \, m!}  \phi_1^k \phi_2^l \phi_3^m E(k+2,l,m),
\\
& \frac{\partial^2c(\diag(\phi_1,\phi_2,\phi_3))}{\partial\phi_1\partial\phi_2}
=\sum_{k,l,m=0}^\infty \frac{1}{k! \, l! \, m!}  \phi_1^k \phi_2^l \phi_3^m E(k+1,l+1,m).
\end{align*}

Finally we note that the series expansion of $c(\Theta)$ for $\sog$
is directly used for the maximum likelihood estimate of $\stiefel$.
Let $\bar X_{1:2}$ be the first two columns of
the averaged data matrix $\bar X\in\mathbb{R}^{3\times 3}$.
Let $\bar X_{1:2}=Q\diag(g_1,g_2)R$ be the (usual) SVD.
Then, as stated before Lemma~\ref{lemma:mle-diag},
the maximum likelihood estimator for $\stiefel$ is
given by $\hat\Theta=Q\diag(\hat\phi_1,\hat\phi_2)R$, where
$(\hat\phi_i)$ is the maximizer of
\begin{align*}
 \sum_{k=1}^2 \phi_kg_k - \log\left(\int_{\stiefel}\exp(\sum_{k=1}^2\phi_kx_{kk})\mu(dX)\right)
 = \sum_{k=1}^2 \phi_kg_k - \log c(\diag(\phi_1,\phi_2,0))
\end{align*}
in terms of $c(\Theta)$ for $\sog$.
Then the MLE is obtained via the series expansion of $c(\Theta)$.

\section{Application to data on orbits of near-earth objects}
\label{sec:data-analysis}

In this section as an illustration of the above discussion, 
we fit Fisher distributions of $\sog$ and  $\stiefel$ 
to data of  orbits  of near-earth objects.
We obtained the data from the web page of
Near Earth Object Program of National Aeronautics and Space Administration
(cf.\ \verb@http://neo.jpl.nasa.gov/cgi-bin/neo_elem@).  
Near-earth objects are comets and asteroids around the Earth.  
\cite{JuppMardia1979} fitted
Fisher distribution on $\stiefel$ to data of comets from \cite{marsden-catalog-1972}, but
did not consider Fisher distribution on $\sog$.  
See also \cite{Mardia1975} for analysis of data of perihelion direction.

The near-earth objects  have ellipsoidal orbits with the Sun as their focus.  The
orbits are characterized by the following two directions:
\begin{enumerate}
\setlength{\itemsep}{0pt}
\item the perihelion direction $\bx_1$, which is the direction of the
closest point on the orbit from the Sun.
\item the normal direction $\bx_2$ to the orbit, which is determined by the 
right-hand rule for the rotation of the object.
\end{enumerate}
The pair $(\bx_1,\bx_2)$ is an element of $\stiefel$.
We can also define $\bx_3=\bx_1\times \bx_2$ 
such that $(\bx_1, \bx_2, \bx_3)$  is an element of $SO(3)$.
\begin{figure}[htbp]
\begin{center}
\unitlength 0.1in
\begin{picture}( 42.2800, 24.2000)(  0.9000,-28.3000)
%
\special{pn 8}%
\special{pa 2882 910}%
\special{pa 2894 940}%
\special{pa 2906 970}%
\special{pa 2912 1002}%
\special{pa 2916 1034}%
\special{pa 2914 1066}%
\special{pa 2910 1098}%
\special{pa 2904 1130}%
\special{pa 2896 1160}%
\special{pa 2886 1190}%
\special{pa 2876 1220}%
\special{pa 2864 1250}%
\special{pa 2850 1280}%
\special{pa 2834 1308}%
\special{pa 2818 1336}%
\special{pa 2802 1362}%
\special{pa 2784 1390}%
\special{pa 2766 1416}%
\special{pa 2748 1442}%
\special{pa 2728 1468}%
\special{pa 2708 1494}%
\special{pa 2688 1518}%
\special{pa 2668 1542}%
\special{pa 2648 1568}%
\special{pa 2626 1592}%
\special{pa 2606 1614}%
\special{pa 2584 1638}%
\special{pa 2562 1662}%
\special{pa 2538 1684}%
\special{pa 2516 1708}%
\special{pa 2494 1730}%
\special{pa 2470 1752}%
\special{pa 2446 1774}%
\special{pa 2422 1794}%
\special{pa 2398 1816}%
\special{pa 2374 1836}%
\special{pa 2350 1858}%
\special{pa 2326 1878}%
\special{pa 2302 1898}%
\special{pa 2276 1918}%
\special{pa 2252 1938}%
\special{pa 2226 1958}%
\special{pa 2200 1978}%
\special{pa 2174 1996}%
\special{pa 2148 2016}%
\special{pa 2124 2034}%
\special{pa 2098 2054}%
\special{pa 2070 2072}%
\special{pa 2044 2090}%
\special{pa 2018 2108}%
\special{pa 1992 2126}%
\special{pa 1964 2142}%
\special{pa 1938 2160}%
\special{pa 1910 2178}%
\special{pa 1884 2194}%
\special{pa 1856 2210}%
\special{pa 1828 2228}%
\special{pa 1800 2244}%
\special{pa 1774 2260}%
\special{pa 1746 2276}%
\special{pa 1718 2290}%
\special{pa 1690 2306}%
\special{pa 1662 2322}%
\special{pa 1634 2336}%
\special{pa 1604 2352}%
\special{pa 1576 2366}%
\special{pa 1548 2380}%
\special{pa 1518 2394}%
\special{pa 1490 2408}%
\special{pa 1462 2422}%
\special{pa 1432 2436}%
\special{pa 1402 2448}%
\special{pa 1374 2462}%
\special{pa 1344 2474}%
\special{pa 1314 2486}%
\special{pa 1286 2500}%
\special{pa 1256 2512}%
\special{pa 1226 2522}%
\special{pa 1196 2534}%
\special{pa 1166 2546}%
\special{pa 1136 2556}%
\special{pa 1106 2566}%
\special{pa 1076 2576}%
\special{pa 1044 2586}%
\special{pa 1014 2596}%
\special{pa 984 2604}%
\special{pa 952 2612}%
\special{pa 922 2622}%
\special{pa 890 2628}%
\special{pa 860 2636}%
\special{pa 828 2644}%
\special{pa 796 2650}%
\special{pa 766 2656}%
\special{pa 734 2662}%
\special{pa 702 2668}%
\special{pa 670 2672}%
\special{pa 638 2676}%
\special{pa 606 2680}%
\special{pa 576 2682}%
\special{pa 544 2684}%
\special{pa 512 2684}%
\special{pa 480 2684}%
\special{pa 448 2682}%
\special{pa 416 2680}%
\special{pa 384 2676}%
\special{pa 352 2670}%
\special{pa 320 2664}%
\special{pa 290 2654}%
\special{pa 260 2644}%
\special{pa 230 2630}%
\special{pa 202 2614}%
\special{pa 176 2594}%
\special{pa 154 2572}%
\special{pa 134 2548}%
\special{pa 116 2520}%
\special{pa 104 2490}%
\special{pa 96 2458}%
\special{pa 92 2426}%
\special{pa 90 2394}%
\special{pa 94 2362}%
\special{pa 98 2332}%
\special{pa 106 2300}%
\special{pa 114 2270}%
\special{pa 126 2238}%
\special{pa 138 2210}%
\special{pa 150 2180}%
\special{pa 164 2152}%
\special{pa 180 2122}%
\special{pa 196 2094}%
\special{pa 212 2068}%
\special{pa 230 2042}%
\special{pa 248 2014}%
\special{pa 268 1990}%
\special{pa 286 1964}%
\special{pa 306 1938}%
\special{pa 328 1914}%
\special{pa 348 1890}%
\special{pa 368 1866}%
\special{pa 390 1842}%
\special{pa 412 1818}%
\special{pa 434 1794}%
\special{pa 456 1772}%
\special{pa 478 1750}%
\special{pa 502 1726}%
\special{pa 524 1704}%
\special{pa 548 1682}%
\special{pa 572 1660}%
\special{pa 596 1640}%
\special{pa 620 1618}%
\special{pa 644 1598}%
\special{pa 668 1576}%
\special{pa 692 1556}%
\special{pa 718 1536}%
\special{pa 742 1516}%
\special{pa 768 1496}%
\special{pa 792 1476}%
\special{pa 818 1458}%
\special{pa 844 1438}%
\special{pa 870 1420}%
\special{pa 896 1400}%
\special{pa 922 1382}%
\special{pa 948 1364}%
\special{pa 974 1346}%
\special{pa 1002 1328}%
\special{pa 1028 1310}%
\special{pa 1054 1292}%
\special{pa 1082 1276}%
\special{pa 1108 1258}%
\special{pa 1136 1242}%
\special{pa 1164 1226}%
\special{pa 1190 1208}%
\special{pa 1218 1192}%
\special{pa 1246 1176}%
\special{pa 1274 1162}%
\special{pa 1302 1146}%
\special{pa 1330 1130}%
\special{pa 1358 1116}%
\special{pa 1386 1100}%
\special{pa 1416 1086}%
\special{pa 1444 1072}%
\special{pa 1472 1056}%
\special{pa 1502 1044}%
\special{pa 1530 1030}%
\special{pa 1560 1016}%
\special{pa 1588 1002}%
\special{pa 1618 990}%
\special{pa 1646 976}%
\special{pa 1676 964}%
\special{pa 1706 952}%
\special{pa 1736 940}%
\special{pa 1764 928}%
\special{pa 1794 916}%
\special{pa 1824 904}%
\special{pa 1854 894}%
\special{pa 1886 882}%
\special{pa 1916 872}%
\special{pa 1946 864}%
\special{pa 1976 854}%
\special{pa 2008 844}%
\special{pa 2038 836}%
\special{pa 2068 828}%
\special{pa 2100 818}%
\special{pa 2130 810}%
\special{pa 2162 804}%
\special{pa 2194 796}%
\special{pa 2224 790}%
\special{pa 2256 784}%
\special{pa 2288 778}%
\special{pa 2320 774}%
\special{pa 2350 770}%
\special{pa 2382 766}%
\special{pa 2414 764}%
\special{pa 2446 762}%
\special{pa 2478 760}%
\special{pa 2510 760}%
\special{pa 2542 760}%
\special{pa 2574 762}%
\special{pa 2606 766}%
\special{pa 2638 770}%
\special{pa 2670 776}%
\special{pa 2700 784}%
\special{pa 2732 794}%
\special{pa 2762 806}%
\special{pa 2790 820}%
\special{pa 2816 838}%
\special{pa 2840 860}%
\special{pa 2862 884}%
\special{pa 2882 910}%
\special{pa 2882 910}%
\special{sp}%
%
\special{pn 20}%
\special{sh 1}%
\special{ar 2614 1068 10 10 0  6.28318530717959E+0000}%
%
\special{pn 8}%
\special{pa 518 1070}%
\special{pa 4318 1070}%
\special{da 0.070}%
%
\special{pn 8}%
\special{pa 2618 410}%
\special{pa 2618 2010}%
\special{da 0.070}%
%
\special{pn 8}%
\special{pa 2838 570}%
\special{pa 2038 2370}%
\special{da 0.070}%
%
\special{pn 8}%
\special{pa 1498 2310}%
\special{pa 1698 2290}%
\special{fp}%
\special{pa 1698 2290}%
\special{pa 1628 2450}%
\special{fp}%
%
\special{pn 13}%
\special{pa 2608 1060}%
\special{pa 2948 830}%
\special{fp}%
\special{sh 1}%
\special{pa 2948 830}%
\special{pa 2882 852}%
\special{pa 2904 860}%
\special{pa 2904 884}%
\special{pa 2948 830}%
\special{fp}%
\special{pa 2618 1060}%
\special{pa 2378 810}%
\special{fp}%
\special{sh 1}%
\special{pa 2378 810}%
\special{pa 2410 872}%
\special{pa 2416 848}%
\special{pa 2440 844}%
\special{pa 2378 810}%
\special{fp}%
\special{pa 2608 1070}%
\special{pa 2778 1270}%
\special{fp}%
\special{sh 1}%
\special{pa 2778 1270}%
\special{pa 2750 1206}%
\special{pa 2744 1230}%
\special{pa 2720 1232}%
\special{pa 2778 1270}%
\special{fp}%
\put(29.5000,-8.2000){\makebox(0,0)[lt]{$\bm{x}_1$ (the perihelion direction)}}%
\put(22.9000,-6.0000){\makebox(0,0)[lt]{$\bm{x}_2$ (the directed unit normal to the orbit)}}%
\put(13.6000,-26.9000){\makebox(0,0)[lt]{$O$: the Sun}}%
\put(28.1000,-12.6000){\makebox(0,0)[lt]{$\bm{x}_3$ (the vector product $\bm{x}_1\times \bm{x}_2$ )}}%
\put(24.2000,-11.0000){\makebox(0,0)[lt]{$O$}}%
%
\end{picture}%
\end{center}
\caption{Orbits of near-earth objects}
\label{fig:neo}
\end{figure}
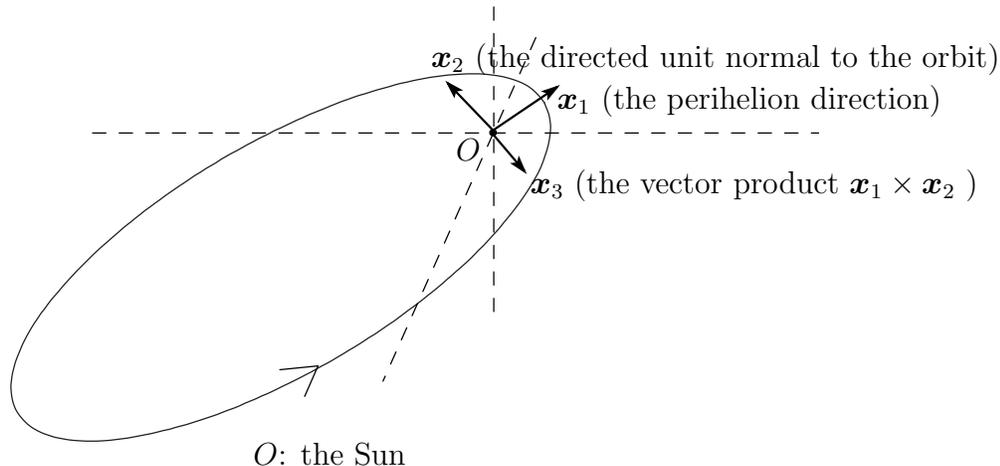

We analyzed 151 comets and 6496 asteroids separately.
To obtain a meaningful result, we identified a tight cluster 
of 67 similar comets, which we treated as one comet, 
and therefore the actual sample size of comets is $N=85$.

The sample mean matrix of the 85 comets is
\begin{align}
 \label{eq:comets-data}
 \bar{\bx} &=
 \begin{pmatrix}
  0.257& \hpm 0.044& \hpm 0.189\\
  0.158&    - 0.052&    - 0.146\\
  0.079& \hpm 0.765& \hpm 0.004
 \end{pmatrix}.
\end{align}
Since the $(3,2)$ element of $\bar{\bm{x}}$ in (\ref{eq:comets-data}) is large,
the orbital plane of the comets are typically close to that of the Earth.
The sample mean matrix of the 6496 asteroids is
\begin{align}
 \label{eq:asteroids-data}
\bar{\bx} &=
 \begin{pmatrix}
\hpm 0.074 & \hpm 0.012 & \hpm 0.016 \\
\hpm 0.018 & \hpm 0.003 & -0.074 \\
-0.001 & \hpm 0.949 & \hpm 0.002
 \end{pmatrix}.
\end{align}

Let $\bar{\bx}_{1:2}$ be the first two columns of $\bar{\bx}$.
We give the SVD of $\bar{\bx}_{1:2}$ and
and the sign-preserving SVD of $\bar{\bx}$.
For the comets data we have
\begin{align}
 \label{eq:comets-data-SVD}
 &\bar{\bx}_{1:2}
 = 
 Q_2
 \begin{pmatrix}
  0.773& 0\\
  0& 0.299
 \end{pmatrix}
 R_2
 \ \ \ \mbox{and}\ \ \ 
 \bar{\bm{x}} =
 Q \begin{pmatrix}
  0.774& 0& 0\\
  0& 0.318& 0\\
  0& 0& 0.210
   \end{pmatrix}
 R,
\end{align}
where
\begin{align*}
 &Q_2= 
 \begin{pmatrix}
  \hpm 0.098& \hpm 0.835\\
  -0.041& \hpm  0.547\\
  \hpm 0.994& -0.060
 \end{pmatrix},
 \ \ \ 
 R_2=\begin{pmatrix}
  0.126& \hpm 0.992\\
  0.992& -0.126
 \end{pmatrix},
 \\
 &
 Q=
 \begin{pmatrix}
    - 0.109& -0.964& -0.242\\
 \hpm 0.048& -0.248&  0.968\\
    - 0.992&  0.093&  0.073
 \end{pmatrix}
 \ \ \ \mbox{and}\ \ \ 
 R=\begin{pmatrix}
      -0.128&     -0.991& -0.041\\
      -0.879& \hpm 0.132& -0.458\\
  \hpm 0.459&     -0.023& -0.888
 \end{pmatrix}.
\end{align*}
For the asteroids data we have
\begin{align}
 \label{eq:asteroids-data-SVD}
 &\bar{\bx}_{1:2}
 = Q_2
 \begin{pmatrix}
  0.949& 0\\
  0& 0.076
 \end{pmatrix}
 R_2
 \ \ \ \mbox{and}\ \ \ 
 \bar{\bx}
 = Q
 \begin{pmatrix}
  0.949& 0& 0\\
  0& 0.0769& 0\\
  0& 0& 0.0749
 \end{pmatrix}
 R,
\end{align}
where
\begin{align*}
 & Q_2=
 \begin{pmatrix}
  -0.0126& \hpm 0.972\\
  -0.00316& \hpm 0.236\\
  -1.000& -0.0130
 \end{pmatrix},
 \ \ \ 
 R_2=\begin{pmatrix}
  7.82\cdot 10^{-6}& -1\\
  1& \hpm 7.82\cdot 10^{-6}
 \end{pmatrix},
 \\
 & Q =
 \begin{pmatrix}
  0.0127&  - 0.623&  0.782 \\
  0.003&  - 0.782&  - 0.623 \\
  1.000&  0.0102&  - 0.00805 \\
 \end{pmatrix}
 \ \ \ \mbox{and}\ \ \ 
 R = \begin{pmatrix}
 - 8.17\cdot 10^{-6}&  1.000&  0.00209 \\
 - 0.782&  - 0.00131&  0.623 \\
  0.623&  - 0.00163&  0.782 \\
\end{pmatrix}.
\end{align*}

As discussed in Section \ref{sec:fisher-distribution}
we can analyze the data either on $\stiefel$ or on $\sog$.
The sufficient statistic of the Fisher distribution on $\stiefel$
is $\bar{\bx}_{1:2}$.

\subsection{The test of uniformity based on Rayleigh's statistic}

As a preliminary analysis we 
test whether the orbits of the comets and asteroids are uniformly distributed
over $\stiefel$ or $\sog$.

We first recall the Rayleigh's statistic for Stiefel manifolds.
Let $\bar{\bx}_{1:r}$ be the sample mean matrix of a data set on $\stiefelrp$
and $N$ be the sample size.
Under the null hypothesis of uniformity over $\stiefelrp$
the Rayleigh statistic 
\begin{equation}
S_{1:r} = pN\cdot \mathrm{tr}(\bar{\bx}_{1:r}^T\bar{\bx}_{1:r})
\label{rayleighstatistic1}
\end{equation}
is asymptotically distributed according to the chi-square distribution
with $rp$ degrees of freedom.
Similarly we can define the Rayleigh statistic for the special orthogonal group.
Let $\bar{\bx}$ be the sample mean matrix of a data set on $\sogp$ and $N$ be the sample size.
Under the null hypothesis of uniformity over $\sogp$,
the Rayleigh statistic
\begin{equation}
S = pN\cdot \mathrm{tr}(\bar{\bx}^T\bar{\bx})
\label{rayleighstatistic2}
\end{equation}
is asymptotically distributed according to the chi-square distribution with $p^2$ degrees of freedom,
whenever $p\geq 3$.
Indeed, $\sqrt{pN}\bar{\bx}$ converges to the $p^2$-dimensional multivariate standard normal distribution
as $N\to\infty$ (see (\ref{eq:sogp-moment})).

Let $\bar{\bx}$ be the comets data (\ref{eq:comets-data})
and $\bar{\bx}_{1:2}$ be the first two columns of $\bar{\bx}$.
The Rayleigh statistic (\ref{rayleighstatistic1}) for $\stiefel$ is
\[
S_{1:2} = 3 \cdot 85 \cdot \mathrm{tr}(\bar{\bx}_{1:2}^{\top}\bar{\bx}_{1:2}) = 175.2
\]
with the $p$-value almost zero.
 The Rayleigh statistic (\ref{rayleighstatistic2}) for $\sog$ is
\[
S =  3\cdot 85 \cdot \mathrm{tr}(\bar{\bx}^{\top}\bar{\bx}) = 189.8
\]
with the $p$-value almost zero.

Similarly for the asteroids data (\ref{eq:asteroids-data}),
the null hypothesis of uniformity is rejected by both 
\[ S_{1:2} = 3\cdot 6496 \cdot \mathrm{tr}(\bar{\bx}_{1:2}^{\top}\bar{\bx}_{1:2}) = 1.77\times 10^4 \]    
and
\[ S = 3 \cdot 6496 \cdot \mathrm{tr}(\bar{\bx}^{\top}\bar{\bx} )= 1.78\times 10^4. \]   
The $p$-values are almost zero.

\subsection{Maximum likelihood estimate  of Fisher distributions}
\label{sec:mle-Fisher}

We compute the MLE (maximum likelihood estimate) 
of the Fisher distribution on $\stiefel$ and $\sog$
by using the two methods described in Section~\ref{sec:two-methods}.
For clarity we denote the parameter of the Fisher distribution
on $\stiefel$ and $\sog$ by $\Theta_{1:2}$ and $\Theta$, respectively.

We first compute the MLE
by using the series expansion approach.
We add a superscript (s) as $\hat\Theta^{\rm (s)}$
for values computed by this method.
For the comets' data the MLE
of the Fisher distribution on $\stiefel$ is
\begin{align*}
 \hat{\Theta}_{1:2}^{\rm (s)}
 &=
 \begin{pmatrix}
  0.689& \hpm 0.341\\
  0.394& -0.229\\
  0.496& \hpm 4.273
 \end{pmatrix}
 = Q_2
 \begin{pmatrix}
  4.326& 0\\
  0& 0.767
 \end{pmatrix}
 R_2
\end{align*}
and the MLE on $\sog$ is
\begin{align*}
 \hat{\Theta}^{\rm (s)}
 &= 
 \begin{pmatrix}
  \hpm 2.953&  \hpm 0.200& 0.871\\
  -0.423& -0.317& 2.390\\
  \hpm 0.378&  \hpm 5.566& 0.251
 \end{pmatrix}
 = 
 Q
 \begin{pmatrix}
  5.614& 0& 0\\
   0& 3.079& 0\\
   0& 0& -2.387\\
 \end{pmatrix}
 R,
\end{align*}
where $Q_2$, $R_2$, $Q$ and $R$ are the matrices in (\ref{eq:comets-data-SVD}).
Note that $\det\bar\bx>0$ but $\det\hat\Theta^{\rm (s)}<0$.
We have approximated the normalizing constant
by our series expansion truncated at the degree $20$
and maximized the approximated log-likelihood function.
These numerical results are obtained by the implementation in R
of the BFGS method.

For the asteroid data the MLEs are
\begin{eqnarray*}
 \hat\Theta^{\rm (s)}_{1:2}
 &=&
 \begin{pmatrix}
  0.156&  0.248 \\
  0.0379&  0.062 \\
  - 0.00225&  19.6 \\
 \end{pmatrix}
 =
 Q_2
 \begin{pmatrix}
  19.6& 0 \\
  0&  0.161 \\
 \end{pmatrix}
 R_2
 \end{eqnarray*}
and
\begin{eqnarray*}
 \hat\Theta^{\rm (s)}
 &=&\begin{pmatrix}
 0.0783&  0.251&  - 0.716 \\
 0.753&  0.0591&  - 0.078 \\
 - 0.00341&  19.6&  0.0503 \\
\end{pmatrix}
=
  Q
  \begin{pmatrix}
  19.6& 0& 0 \\
   0&  0.815& 0 \\
   0& 0&  - 0.655 \\
  \end{pmatrix}
  R,
\end{eqnarray*}
where $Q_2$, $R_2$, $Q$ and $R$ are the matrices in (\ref{eq:asteroids-data-SVD}).
When we approximate the normalizing constant by the polynomial of degree $20$ 
as in the case of the comets' data and
use the output as a starting point of the holonomic gradient descent,
the HGD immediately finds a better likelihood value and then we 
reject the approximate solution by the series expansion method
of this degree.
We increase the degree of the approximation until the HGD does not reject
the output.
In the outputs above, we approximate the normalizing constant
by our series expansion truncated at the degree $40$.

We next compute the MLE by the HGD
with solving numerically the associated dynamical system
along gradient directions.
We add a superscript (h) as $\hat\Theta^{\rm (h)}$
for values computed by the holonomic gradient descent.
We use the output of the series expansion method as a starting point
of the HGD.
For the comets' data the MLE's
of the Fisher distribution on $\stiefel$ 
and on $SO(3)$ by the HGD
respectively agree with the MLE's by the series expansion method.

For the asteroid data, 
the MLE's are
$
 \hat\Theta^{(h)}_{1:2}=
 \hat\Theta^{(s)}_{1:2}$
and
$$
 \hat\Theta^{(h)} = 
 \begin{pmatrix}
 0.0779&  0.251&  - 0.733 \\
 0.769&  0.0591&  - 0.0776 \\
 - 0.00346&  19.6&  0.0505 \\
\end{pmatrix}
= Q
 \begin{pmatrix}
 19.6& 0& 0 \\
0&  0.831& 0 \\
0& 0&  - 0.671 \\
\end{pmatrix}
R.
$$
This output improves the approximate MLE by the series expansion method.

The AIC (Akaike Information Criterion) values are given in Table~\ref{table:aic}.
For each data set, AIC
was minimized by the Fisher distribution on $\sog$.
The log-likelihood ratio test statistic (LLR) of $\stiefel$ against $\sog$
and its $p$-value based on the $\chi^2$-approximation with 3 degrees of freedom are also shown.


\begin{table}[htbp]
 \caption{AIC of each data and each model, and the result of the likelihood ratio test of $\stiefel$ against $\sog$.}
 \label{table:aic}
 \begin{center}
  \begin{tabular}{c|ccc|ccc}
   & \multicolumn{3}{|c|}{comets}& \multicolumn{3}{|c}{asteroids} \\
   & Uniform& $\stiefel$& $\sog$& Uniform& $\stiefel$& $\sog$ \\
   \hline
   AIC& $0$& $-207.0$& $-219.7$& $0$& $-34764.6$& $-34769.2$ \\
   \hline
   LLR& & & $18.7$& & & $10.6$ \\
   $p$-value& & & $0.0003$& & & $0.014$
  \end{tabular}
 \end{center}
\end{table}



\newpage

\appendix
\section{Partial differential equation  for ${}_0F_1(p/2,Y)$}
\label{app:pde-hypergeometric}

If $\Theta=\mathrm{diag}(\theta_{ii})$ is diagonal, then
$y_i=\theta_{ii}^2/4$.
By change of variables from (\ref{eq:hyper-PDE}) we have
\begin{align}
 &y_i\Rd_i^2
 +\left\{\frac{p}{2}-\frac{r-1}{2}+\frac{1}{2}\sum_{j=1,j\neq i}^r\frac{y_i}{y_i-y_j}\right\}\Rd_i
 -\frac{1}{2}\sum_{j=1,j\neq i}^{r}\frac{y_j}{y_i-y_j}\Rd_j - 1
 \nonumber
 \\
 &\ =\ \Rd_{ii}^2+(p-r)\frac{1}{\theta_{ii}}\Rd_{ii}
 +\sum_{j\neq i}\frac{1}{\theta_{ii}^2-\theta_{jj}^2}\left\{\theta_{ii}\Rd_{ii}-\theta_{jj}\Rd_{jj}\right\}-1.
 \label{eqn:Muirhead}
\end{align}
We now show that the numerator of (\ref{eqn:Muirhead}) belongs $I_{\rm diag}$ for
small dimensions.

For $p=r=2$ (i.e.\ for $O(2)$)
by Macaulay2 we checked that the above $I$ is holonomic.
Also by asir (nk\_restriction), a set of generators of $I_{\rm diag}$ is given
as 
\begin{align*}
 & h_1 = (-\theta_{22}^2+\theta_{11}^2)\Rd_{11}^4 + 6\theta_{11}\Rd_{11}^3 + (2\theta_{22}^2-2\theta_{11}^2+6)\Rd_{11}^2
 -6\theta_{11}\Rd_{11}-\theta_{22}^2+\theta_{11}^2-3,
 \\
 & h_2 = (\theta_{22}^2-\theta_{11}^2)\Rd_{11}^2-\theta_{11}\Rd_{11}+\theta_{22}\Rd_{22}-\theta_{22}^2+\theta_{11}^2,
 \\
 & h_3 = \theta_{22}\Rd_{11}^4+\theta_{11}\Rd_{22}\Rd_{11}^3+(3\Rd_{22}-\theta_{22})\Rd_{11}^2-\theta_{11}\Rd_{22}\Rd_{11}-2\Rd_{22}
,
 \\
 & h_4 = \theta_{11}\theta_{22}\Rd_{11}^3+(\theta_{11}^2\Rd_{22}-\theta_{22})\Rd_{11}^2+(-\theta_{11}^2-1)\Rd_{22}+\theta_{22},
 \\
 & h_5 = -\Rd_{11}^2+\Rd_{22}^2.
\end{align*}
Looking at $h_2$ and $h_5$ we have
\begin{align*}
 &h_2\ =\ (\theta_{22}^2-\theta_{11}^2)
 \left\{\Rd_{11}^2+\frac{\theta_{11}\Rd_{11}-\theta_{22}\Rd_{22}}{\theta_{11}^2-\theta_{22}^2}
 -1\right\},
 \\
 &\frac{h_2}{\theta_{22}^2-\theta_{11}^2}+h_5\ =\ 
 \left\{\Rd_{22}^2+\frac{\theta_{22}\Rd_{22}-\theta_{11}\Rd_{11}}{\theta_{22}^2-\theta_{11}^2}
 -1\right\}.
\end{align*}
These are the same as  (\ref{eqn:Muirhead}) for $p=r=2$.

For the case of $V_2(\mbR^3)$ ($p=3$, $r=2$)
by Macaulay2 we have checked that $I$ is holonomic
By asir (nk\_restriction) $I_{\rm diag}$ has the set of generators:
\begin{align*}
 & h_1 = -\theta_{11}\Rd_{22}\Rd_{11}^2 + (-\theta_{22}\Rd_{22}^2-3\Rd_{22}+\theta_{22})\Rd_{11} + \theta_{11}\Rd_{22},
 \\
 & h_2 = \theta_{11}\theta_{22}\Rd_{11}^2+\theta_{22}\Rd_{11}-\theta_{11}\theta_{22}\Rd_{22}^2-\theta_{11}\Rd_{22},
 \\
 & h_3 = \theta_{11}^2\Rd_{11}^2+2\theta_{11}\Rd_{11}-\theta_{22}^2\Rd_{22}^2-2\theta_{22}\Rd_{22}+\theta_{22}^2-\theta_{11}^2,
 \\
 & h_4 = -\theta_{11}\Rd_{11}^2+(\theta_{22}^2\Rd_{22}^2+2\theta_{22}\Rd_{22}-\theta_{22}^2-1)\Rd_{11}
 +\theta_{11}\theta_{22}\Rd_{22}^3+2\theta_{11}\Rd_{22}^2-\theta_{11}\theta_{22}\Rd_{22},
 \\
 & h_5 = (-\theta_{11}\theta_{22}\Rd_{22}^2-\theta_{11}\Rd_{22}+\theta_{11}\theta_{22})\Rd_{11}-\theta_{22}^2\Rd_{22}^3
 -4\theta_{22}\Rd_{22}^2+(\theta_{22}^2-2)\Rd_{22}+2\theta_{22},
 \\
 & h_6 = -\theta_{11}\theta_{22}\Rd_{11}+(\theta_{22}^3-\theta_{11}^2\theta_{22})\Rd_{22}^2
 +(2\theta_{22}^2-\theta_{11}^2)\Rd_{22}-\theta_{22}^3+\theta_{11}^2\theta_{22}.
\end{align*}
Looking at $h_6$
\begin{align*}
 h_6&\ =\ -\theta_{11}\theta_{22}\Rd_{11}+(\theta_{22}^3-\theta_{11}^2\theta_{22})\Rd_{22}^2
 +(2\theta_{22}^2-\theta_{11}^2)\Rd_{22}-\theta_{22}^3+\theta_{11}^2\theta_{22}
 \\
 &\ =\ 
 (\theta_{22}^2-\theta_{11}^2)\theta_{22}
 \left\{
 \Rd_{22}^2 + \frac{2\theta_{22}^2-\theta_{11}^2}{(\theta_{22}^2-\theta_{11}^2)\theta_{22}}\Rd_{22}
 -\frac{\theta_{11}}{\theta_{22}^2-\theta_{11}^2}\Rd_{11}
 -1
 \right\}
 \\
 &\ =\ 
 (\theta_{22}^2-\theta_{11}^2)\theta_{22}
 \left\{
 \Rd_{22}^2 + \frac{1}{\theta_{22}}\Rd_{22}
 + \frac{\theta_{22}\Rd_{22}-\theta_{11}\Rd_{11}}{\theta_{22}^2-\theta_{11}^2}
 -1
 \right\}
\end{align*}
we see that it coincides with  the case of 
$p=3, r=2, i=2$ in (\ref{eqn:Muirhead}).

The case of $p=r=3$ is too big for the current implementation of the $D$-module
theoretic restriction algorithm in the nk\_restriction package.
Theorem \ref{th:extraeq}, which gives elements in $I_{\rm diag}$,
is shown by a different method.

\section{Asymptotic evaluation of $E(k,l,m)$}
\label{app:eklm}

We derive an asymptotic form of $E(k,l,m)$ when
$k,l,m$ simultaneously go to infinity.
Let $k=n\alpha$, $l=n\beta$ and $m=n\gamma$
where $\alpha$, $\beta$ and $\gamma$ are fixed positive numbers.
We use Laplace's method to show
\begin{equation}
 E(k,l,m)
  \sim \sqrt{\frac{2}{\pi}} ((k+l)(l+m)(k+m))^{-1/2}
  \label{E_expr_asymp}
\end{equation}
as $n\to\infty$.
The integrand $x_{11}^kx_{22}^lx_{33}^m$ of $E(k,l,m)$
is maximized at four points $(x_{11},x_{22},x_{33})=(1,1,1)$, $(-1,-1,1)$, $(-1,1,-1)$ and $(1,-1,-1)$
as long as $k,l,m$ are all even or all odd.
By symmetry it is sufficient to consider the neighborhood of $\diag(1,1,1)$,
where $X$ is approximated by
\[
X=
\begin{pmatrix}
(1 - \epsilon_1^2 - \epsilon_2^2)^{1/2}& -\epsilon_1 &  -\epsilon_2  \\
\epsilon_1 &(1 - \epsilon_1^2 - \epsilon_3^2)^{1/2}&   -\epsilon_3 \\
\epsilon_2 &  \epsilon_3 & (1 - \epsilon_2^2 - \epsilon_3^2)^{1/2}
\end{pmatrix}
\]
with sufficiently small numbers $\epsilon_1,\epsilon_2,\epsilon_3$.
The density of $(\epsilon_1,\epsilon_2,\epsilon_3)$ with respect to the Lebesgue measure
$d\epsilon_1d\epsilon_2d\epsilon_3$ is $1/{\rm Vol}(\sog)=1/(8\pi^2)$.
Hence we obtain
\begin{align*}
E(k,l,m)
& \sim 4\int (1 - \epsilon_1^2 - \epsilon_2^2)^{k/2}
 (1 - \epsilon_1^2-\epsilon_3^2)^{l/2}
 (1 - \epsilon_2^2-\epsilon_3^2)^{m/2}
 \frac{1}{8\pi^2}d\epsilon_1d\epsilon_2d\epsilon_3
 \\
& \sim 4\int {\rm e}^{-(k+l)\epsilon_1^2/2-(k+m)\epsilon_2^2/2-(l+m)\epsilon_3^2/2}
 \frac{1}{8\pi^2}d\epsilon_1d\epsilon_2d\epsilon_3
 \\
& = \sqrt{\frac{2}{\pi}}((k+l)(l+m)(k+m))^{-1/2}.
\end{align*}
We have checked that the right-hand side gives a good approximation to the exact value
of $E(k,l,m)$ for  $k+l+m \ge 100$.

The same argument shows that for $\sogp$
\begin{align*}
 E\left[\prod_{i=1}^p x_{ii}^{k_i}\right]
 \sim \frac{p(p-1)}{2}\frac{1}{{\rm Vol}(\sogp)}
 \left(\prod_{i<j}(k_i+k_j)\right)^{-1/2}
\end{align*}
as $n\to\infty$ when $k_i=n\alpha_i$, $\alpha_i>0$,  and $k_i$'s are 
are all even or all odd.



\bibliographystyle{plainnat}
\bibliography{rotation}

\begin{thebibliography}{21}
\providecommand{\natexlab}[1]{#1}
\providecommand{\url}[1]{\texttt{#1}}
\expandafter\ifx\csname urlstyle\endcsname\relax
  \providecommand{\doi}[1]{doi: #1}\else
  \providecommand{\doi}{doi: \begingroup \urlstyle{rm}\Url}\fi

\bibitem[Barndorff-Nielsen(1978)]{barndorff-nielsen}
Ole Barndorff-Nielsen.
\newblock \emph{Information and {E}xponential {F}amilies}.
\newblock John Wiley \& Sons Inc., New York, 1978.
\newblock Wiley Series in Probability and Mathematical Statistics.

\bibitem[Beran(1979)]{Beran1979}
R.~J. Beran.
\newblock Exponential models for directional data.
\newblock \emph{The Annals of Statistics}, 7:\penalty0 1162--1178, 1979.

\bibitem[Chikuse(2003)]{chikuse03}
Yasuko Chikuse.
\newblock \emph{Statistics on {S}pecial {M}anifolds}, volume 174 of
  \emph{Lecture Notes in Statistics}.
\newblock Springer-Verlag, New York, 2003.
\newblock ISBN 0-387-00160-3.

\bibitem[D.Cox et~al.(2005)D.Cox, Little, and D.O'Shea]{CLO2}
D.Cox, J.~Little, and D.O'Shea.
\newblock \emph{Using Algebraic Geometry}.
\newblock Springer-Verlag, New York, 2005.
\newblock ISBN 978-0387207339.

\bibitem[Grayson and Stillman()]{M2}
Daniel~R. Grayson and Michael~E. Stillman.
\newblock Macaulay2, a software system for research in algebraic geometry.
\newblock Available at http://www.math.uiuc.edu/Macaulay2/.

\bibitem[H.Hashiguchi et~al.(2012)H.Hashiguchi, Y.Numata, N.Takayama, and
  A.Takemura]{HNTT}
H.Hashiguchi, Y.Numata, N.Takayama, and A.Takemura.
\newblock Holonomic gradient method for the distribution function of the
  largest root of a {W}ishart matrix.
\newblock \emph{arxiv:1201.0472}, 2012.

\bibitem[James(1968)]{James1968}
A.~T. James.
\newblock Calculation of zonal polynomial coefficients by use of the
  {L}aplace-{B}eltrami operator.
\newblock \emph{Ann. Math. Statist}, 39:\penalty0 1711--1718, 1968.
\newblock ISSN 0003-4851.

\bibitem[Jupp and Mardia(1979)]{JuppMardia1979}
P.~E. Jupp and K.~V. Mardia.
\newblock Maximum likelihood estimators for the matrix von {M}ises--{F}isher
  and {B}ingham distributions.
\newblock \emph{The Annals of Statistics}, 7:\penalty0 599--606, 1979.

\bibitem[Khatri and Mardia(1977)]{KhatriandMardia1977}
C.~G. Khatri and K.~V. Mardia.
\newblock The von {M}ises--{F}isher distribution in orentation statistics.
\newblock \emph{Journal of the {R}oyal {S}tatistical {S}ociety: {S}eries {B}},
  39:\penalty0 95--106, 1977.

\bibitem[Kume and Wood(2005)]{KumeWood2005}
A.~Kume and A.~T.~A. Wood.
\newblock Saddlepoint approximations for the {B}ingham and {F}isher-{B}ingham
  normalising constants.
\newblock \emph{Biometrika}, 92:\penalty0 465--476, 2005.

\bibitem[Mardia(1975)]{Mardia1975}
K.~V. Mardia.
\newblock Statistics of directional data.
\newblock \emph{Journal of the Royal Statistical Society: Series B},
  37:\penalty0 349--393, 1975.

\bibitem[Mardia and Jupp(2000)]{MardiaJupp2000}
Kanti~V. Mardia and Peter~E. Jupp.
\newblock \emph{Directional {S}tatistics}.
\newblock Wiley Series in Probability and Statistics. John Wiley \& Sons Ltd.,
  Chichester, 2000.
\newblock ISBN 0-471-95333-4.

\bibitem[Marsden(1972)]{marsden-catalog-1972}
Brian~G. Marsden.
\newblock \emph{Catalogue of {C}ometary {O}rbits}.
\newblock Smithonian Astrophysical Observatory, Cambridge, MA, 1972.

\bibitem[Muirhead(1970)]{Muirhead1970}
R.~J. Muirhead.
\newblock Systems of partial differential equations for hypergeometric
  functions of matrix argument.
\newblock \emph{Ann. Math. Statist.}, 41:\penalty0 991--1001, 1970.
\newblock ISSN 0003-4851.

\bibitem[Muirhead(1982)]{muirhead82}
Robb~J. Muirhead.
\newblock \emph{Aspects of {M}ultivariate {S}tatistical {T}heory}.
\newblock John Wiley \& Sons Inc., New York, 1982.
\newblock ISBN 0-471-09442-0.
\newblock Wiley Series in Probability and Mathematical Statistics.

\bibitem[Nakayama et~al.(2011)Nakayama, Nishiyama, Noro, Ohara, Sei, Takayama,
  and Takemura]{hgd}
Hiromasa Nakayama, Kenta Nishiyama, Masayuki Noro, Katsuyoshi Ohara, Tomonari
  Sei, Nobuki Takayama, and Akimichi Takemura.
\newblock Holonomic gradient descent and its application to the
  {F}isher-{B}ingham integral.
\newblock \emph{Advances in Applied Mathematics}, 47:\penalty0 639--658, 2011.
\newblock \doi{10.1016/j.aam.2011.03.001}.

\bibitem[OpenXM Mathematics Repository()]{openxmmath}
OpenXM Mathematics Repository.
\newblock Data for the special orthogonal group and the stiefel manifold.
\newblock http://www.math.kobe-u.ac.jp/OpenXM/Math/Fisher-Bingham/SO3.

\bibitem[Prentice(1986)]{prentice-1986}
Michael~J. Prentice.
\newblock Orientation statistics without parametric assumptions.
\newblock \emph{J. Roy. Statist. Soc. Ser. B}, 48\penalty0 (2):\penalty0
  214--222, 1986.
\newblock ISSN 0035-9246.

\bibitem[RisaAsir developing team()]{asir}
RisaAsir developing team.
\newblock Risa/asir, a computer algebra system.
\newblock Available at http://www.math.kobe-u.ac.jp/Asir/.

\bibitem[Takemura(1984)]{takemura-zonal}
Akimichi Takemura.
\newblock \emph{Zonal {P}olynomials}.
\newblock Institute of Mathematical Statistics Lecture Notes---Monograph
  Series, 4. Institute of Mathematical Statistics, Hayward, CA, 1984.
\newblock ISBN 0-940600-05-6.

\bibitem[Wood(1993)]{wood-1993}
Andrew T.~A. Wood.
\newblock Estimation of the concentration parameters of the {F}isher matrix
  distribution on {${\rm SO}(3)$} and the {B}ingham distribution on {$S_q,\
  q\geq 2$}.
\newblock \emph{Austral. J. Statist.}, 35\penalty0 (1):\penalty0 69--79, 1993.
\newblock ISSN 0004-9581.

\end{thebibliography}

\end{document}